\newtheorem{lemma}{Lemma}
\newtheorem{proposition}{Proposition}
\newcommand*{\Cdot}{\raisebox{-0.45ex}{\scalebox{1.15}{$\cdot$}}}
\begin{document}
%
\title{Correlated Cascades: Compete or Cooperate}
\author{Ali Zarezade$^*$, Ali Khodadadi$^*$, Mehrdad Farajtabar$^\dagger$, Hamid R. Rabiee$^*$, Hongyuan Zha$^\dagger$\\
$^*$Sharif University of Technology, Azadi Ave, Tehran, Iran\\
$^\dagger$Georgia Institute of Tech., North Ave NW, Atlanta, GA 30332, United States\\
\{zarezade,khodadadi\}@ce.sharif.edu, mehrdad@gatech.edu, rabiee@sharif.edu, zha@cc.gatech.edu
}
\maketitle 
\begin{abstract}
In real world social networks, there are multiple cascades which are rarely independent. They usually compete or cooperate with each other. Motivated by the reinforcement theory in sociology we leverage the fact that adoption of a user to any behavior is modeled by the aggregation of behaviors of its neighbors. We use a multidimensional marked Hawkes process to model users product adoption and consequently spread of cascades in social networks. The resulting inference problem is proved to be convex and is solved in parallel by using the barrier method. The advantage of the proposed model is twofold; it models correlated cascades and also learns the latent diffusion network. Experimental results on synthetic and two real datasets gathered from Twitter, URL shortening and music streaming services, illustrate the superior performance of the proposed model over the alternatives.
\end{abstract}

\section{Introduction}
Social networks and virtual communities play a key role in today's life. People share their thoughts, beliefs, opinions, news, and even their locations in social networks and engage in social interactions by commenting, liking, mentioning and following each other.
 This virtual world is an ideal place for studying social behaviors and spread of cultural norms \cite{Vespignani2012},  contagion of diseases \cite{Barabasi2015},  advertising and marketing \cite{valera2015} and estimating the culprit in malicious diffusions \cite{farajtabar2015back}.
Among them, the study of information diffusion or more generally \emph{dynamics on the network} is of crucial importance and can be used in many applications. 
The trace of information diffusion, virus or infection spread, rumor propagation, and product adoption is usually called \emph{cascades}.

In conventional studies of diffusion networks, individual cascades are mostly considered in isolation, \textit{i.e.}, independent of each other~\cite{gomez2015estimating}. However in realistic situations, they are rarely independent and can be \emph{competitive}, when a URL shortening service become popular the others receive less attention; or \emph{cooperative}, when usage of Google Play Music correlates with that of Youtube due to, for example,  simultaneous arrival of new albums (Fig.~\ref{fig:emp_music}). 

Modeling multiple cascades which are correlated to each other is a challenging problem. Considerable work have done to extend basic diffusion models to competitive case \cite{He2012,Pathak2010,Lu2015}. Meyer \textit{et al.} proposed a probabilistic model for diffusion of competitive or cooperative contagions \cite{Myers2012}. They estimate the probability of a user being infected given a sequence of previously observed contagions. But the main drawback of these models is that they are all discrete time, which limits the flexibility of model. Valera in \cite{valera2015} proposed a continuous time method for modeling competing products but incapable of learning the latent diffusion network and  prone to overfitting. 

\begin{figure}[!t]
\centering
\hspace{-4mm}
\includegraphics[width=0.25\textwidth]{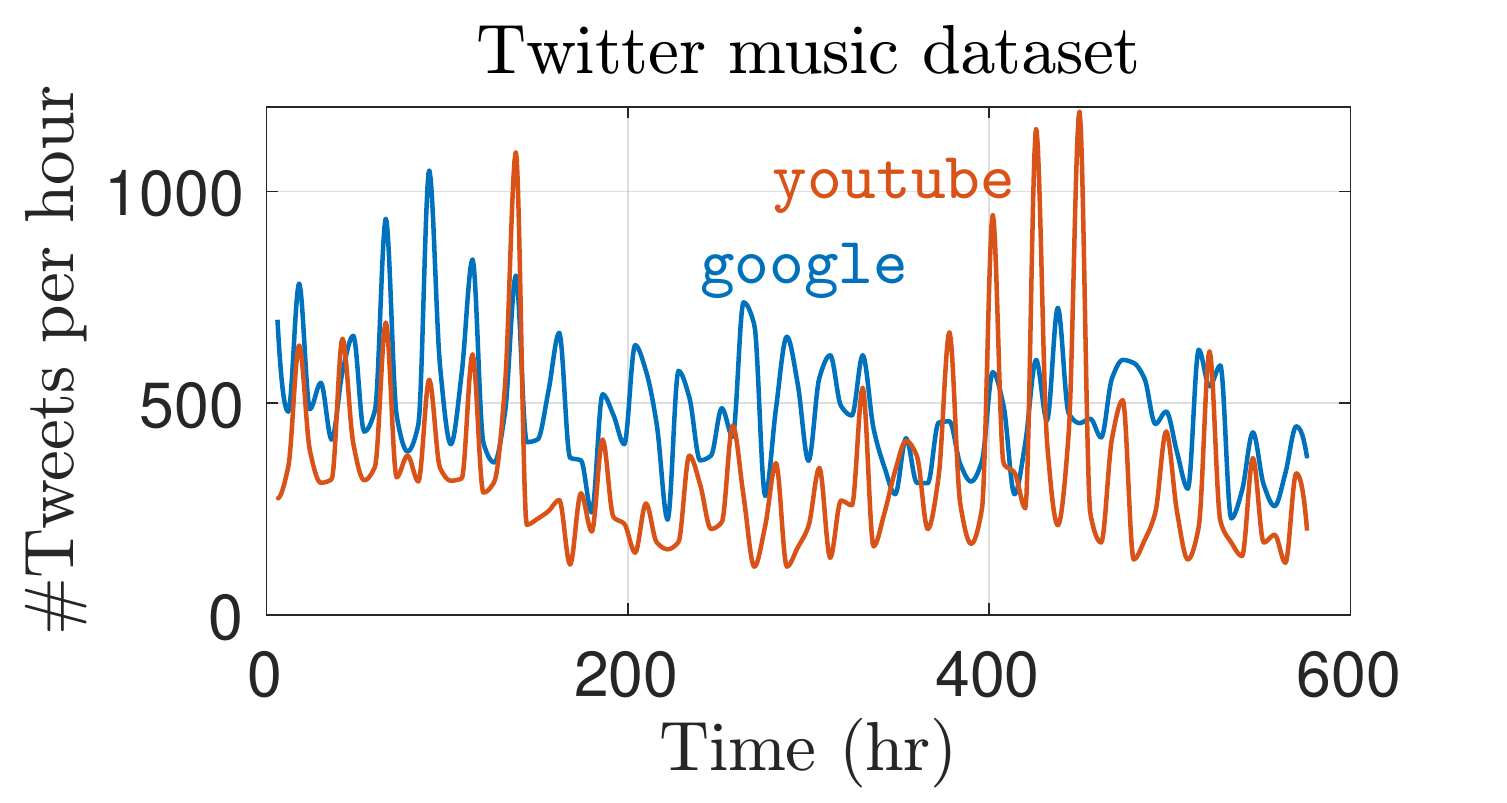}  \hspace{-5mm}
\includegraphics[width=0.25\textwidth]{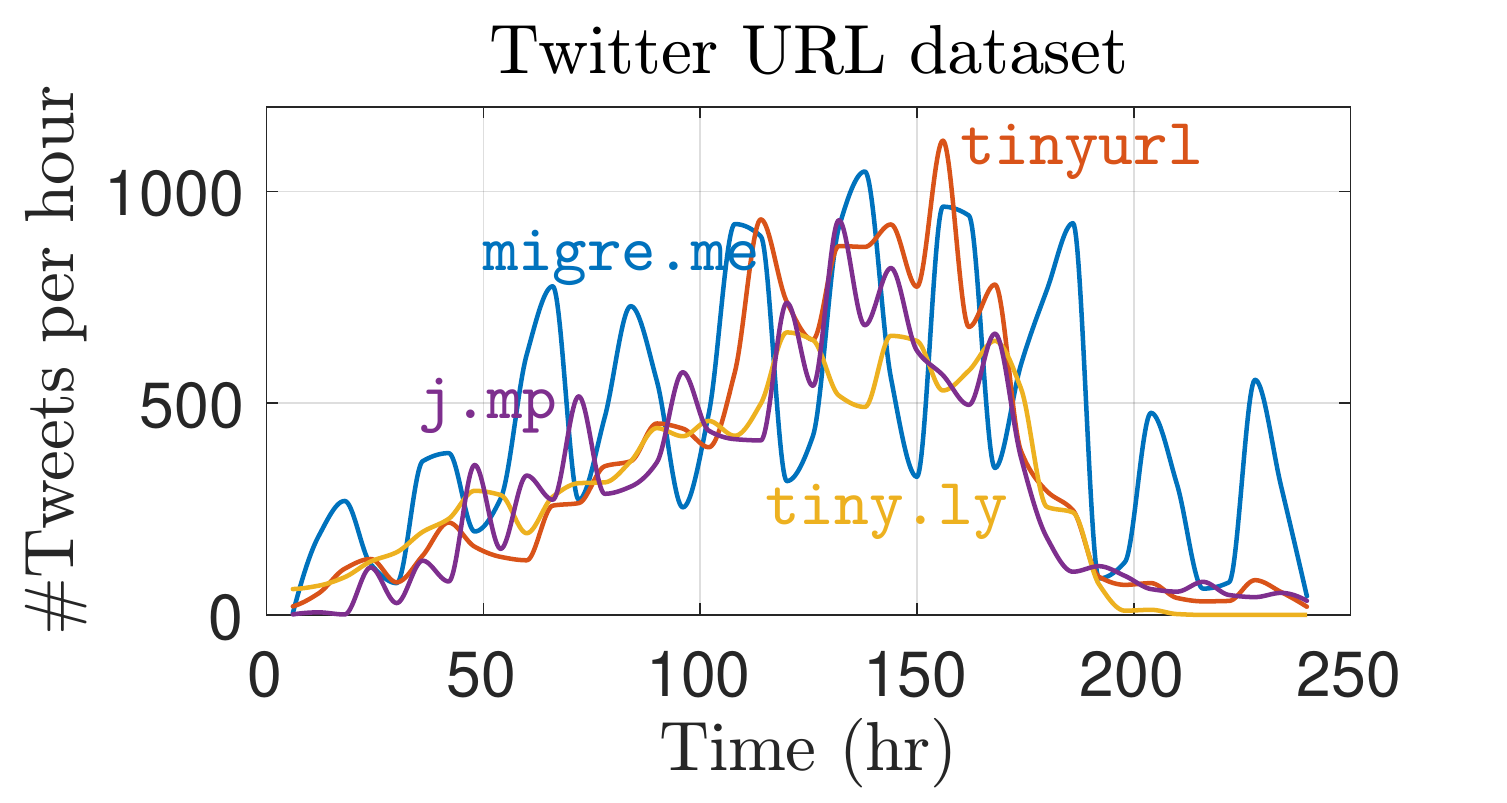} \vspace{-4mm}
\caption{Visualization of correlated cascading behavior in real data.
(\textit{left}) Tweets with terms \texttt{\small{google}} and \texttt{\small{youtube}}  are synchronized most of the time. (\textit{right}) different URL shortening services; \texttt{\small{tiny.ly}} and \texttt{\small{tinyurl}} are cooperating while \texttt{\small{migre.me}} and\texttt{\small{j.mp}} are competing.}
 \vspace{-8mm}
\label{fig:emp_music}
\end{figure} 

Inspired by the sociological evidence in social science about users' behavior, and the success of the recurrent point processes in modeling temporal event histories, we propose a data-driven continuous time method, which can jointly model the spread of multiple correlated behaviors (information, ideas, memes), and learn the latent diffusion network.
Intuitively, the rate (or intensity) that a user adopts a behavior is considered as the weighted sum of those of her neighbors.
The more your friends adopt a behavior the more you are excited to adopt it. 
This users' behavior adoption intensity is modeled by a special stochastic point process, called Hawkes process \cite{Hawkes1971}. 
It is well suited for modeling  temporal events with self-exciting property \cite{Zhou2013a,Blundell2012,Farajtabar2014,xu2016learning}.

Finally, we validate the proposed method on synthetic and two real datasets. First, using synthetic data generated randomly we've studied how effectively we can recover the latent diffusion network parameters. We have also highlighted the correlated behavior versus the independent and the competitive versus cooperative cascades using synthetic data.
Next, we move forward to real data and show the correlated behavior in the real dataset. Furthermore, with parameters learned from training data, we generate future events and compare it with real held-out test data and show that our framework can model the activities in social network better than the alternatives. 
Our contributions are as follows:
\begin{itemize}
\item Modeling the users' behavior adoption by using a multidimensional marked Hawkes process and consequently, the spread of multiple correlated cascades in social networks.
\item Proposing a convex optimization formulation to learn the latent diffusion network and model parameters which is solved in parallel by using the barrier method.
\item Curating a compelling dataset on streaming music services using Twitter API, from tweets of 30,000 active users during one month in 2015 year.
\end{itemize}


\vspace{-3mm}
\section{Prior Works}\label{priorwork}
The spread of information is often modeled as a dynamical process over networks~\cite{Vespignani2012} and its analysis has attracted significant attention in recent years. These studies can be categorized into three groups. 

Early methods studied the information diffusion in continuous time and without any network structure.
They are used mainly to analyze biological contagions \cite{Barabasi2015}. The dynamical process is described by a differential equation which models the number of population in different stages of a disease \cite{Porter2014}.
Heterogeneous mean-field and particle-network frameworks are proposed to remove the homogeneous population assumption and incorporate the network structure \cite{Vespignani2012}. 

Another line of work which is discrete time and considers the network structure, stemmed from sociological theories about influence spread. Typically they assumed that nodes have two states (active, inactive), and are progressive (active nodes can't become inactive).
Linear Threshold and Independent Cascade are two simple and widely studied models of social contagion \cite{Hodas2014}. In Linear Threshold, a node becomes active when the weighted sum of its active neighbors is higher than a prespecified threshold. In Independent Cascade, each infected node has an independent probability to activate its neighbors.

The third category is continuous time and considers the network structure.
Rodriguez et al. proposed a model in \cite{Rodriguez2011,Rodriguez2013} for information diffusion and latent influence network inference using survival theory. They extended it to dynamic networks in \cite{Rodriguez2010structure}. 
The problem of network inference from a set of cascades is theoretically investigated in \cite{Daneshmand2014}. 
In \cite{Iwata2013}, the superposition property of the Poisson process is used to model the effect of users' sharing activities on others in online communities, and consequently learning latent influence network.
Also in \cite{Linderman2014}, the superposition property is used in a fully Bayesian method with parallel inference.
The scalable influence estimation is addressed in \cite{Du2013} by proposing a nearly linear randomized algorithm.
The problem of activity shaping, driving population toward specific target state is investigated in \cite{Farajtabar2014,farajtabar2016multistage}. However, in all of the above models, cascades of adoption/propagation are independent which is usually not true in the real world.

Closely related to the our work, authors in~\cite{farajtabar2015coevolve} proposed a probabilistic framework to model the evolution of information diffusion and network evolution. However, in their work cascades are still evolving independently.
Only recently \cite{valera2015} proposed an algorithm for multiple correlated cascades which models the intensity of user-products by a Hawkes process. 
To model both competition and cooperation it allows the parameters of the intensity function to be negative, and it may results in negative intensity in some cases. Also, it can't learn the latent diffusion network. But we propose a nonlinear user-product intensity using a marked Hawkes process, which has better performance.

\vspace{-2mm}
\section{Proposed Method}\label{proposedmethod}
\subsection{Hawkes Process Background}
A point process is a stochastic process with realizations that are discrete points in time, $\{t_1, t_2, \ldots, t_n\}$.
According to the Kolmogorov extension theorem \cite{Daley2002}, a stochastic process, can be defined using its finite-dimensional distributions. 
To describe the finite-dimensional distributions $f(t_1,t_2,\ldots,t_n)$, we use the chain rule of probability: $f(t_1,t_2,\cdots,t_n)=\prod_i f(t_i|t_{1:i-1})$. Therefore, it suffices to describe only the conditionals, which are abbreviated to $f(t_n|\mathcal{H}_n)$ or simply $f^*(t)$.  Here, $\mathcal{H}_n$ is the history of events before the $n^{th}$ one.
A closely related notion is conditional intensity or rate $\lambda^*(t)$ defined as: 
\begin{equation}\label{equ:hazard}
\lambda^*(t) = {f^*(t)}/[{1-F^*(t)}],
\end{equation}
where $F(\cdot)$ is the cdf of $f(\cdot)$. The relation of $\lambda^*(t)$ and $f^*(t)$ can be expressed in the other direction as in \cite{Aalen2008}:
\begin{align*}
f^*(t) &= \lambda^*(t) \exp\left(-\int_{t_n}^{t} \lambda^*(s) ds\right).
\end{align*}
Another basic concept is the survival function, $S^*(t)=1-F^*(t)$, the probability that no event occurs after the last event in $t_n$ till $t$.
To understand the intensity more intuitively we incorporate the alternative way of describing a point process, the counting process $N$ associated to $\lambda^*(t)$. Let $N(t,s]$ denotes the number of events in interval $(t,s]$. Multiplying both sides of \eqref{equ:hazard} by $dt$ results in: 
\begin{align*}
\lambda^*(t) dt &= \frac{\text{Pr}\left\{N(t_n,t]=0, N(t,t+dt]=1 | \mathcal{H}_n \right\}} {\text{Pr}\left\{N(t_n,t)=0 |\mathcal{H}_n \right\}} \nonumber \\
&=\text{Pr}\left\{N(t,t+dt]=1 | \mathcal{H}_n, N(t_n,t]=0\right\} \nonumber \\
&=\text{Pr}\left\{N(dt)=1 | \mathcal{H}_{t^-}\right\} 
 \approx\mathbb{E}\left[ N(dt) | \mathcal{H}_{t^-} \right]
\end{align*}
where $N(dt):=N(t,t+dt]$ and $\mathcal{H}_{t^-}$ is the history of all events up to $t$.
Different point processes can be constructed by specifying $f^*(t)$ or equivalently $\lambda^*(t)$. 
In the Hawkes process the intensity is dependent on the history:
\begin{align*}
\lambda^*(t) = \mu + \int_{-\infty}^t g(t-s) N(ds) = \mu + \sum_{i=1}^{|\mathcal{H}_{t^-}|} g(t-t_i)
\end{align*} 
where $\mu$ is the base intensity and $g(t)$ is the kernel which is usually exponentially decaying to diminish the effect of past events. Generally, in multidimensional Hawkes process:  
\begin{align*}
\bm{\lambda}^*(t) &= \bm{\mu} + \int_{-\infty}^t \mathbf{A} g(t-s) \bm{N}(ds),
\end{align*}
where $\bm{\lambda}, \bm{\mu}, \bm{N}$ are vectors and $\mathbf{A} = [\alpha_{ij}]$ is a matrix of mutual-excitation kernels. 
$\alpha_{ij}$ parameterizes the influence of user $j$ to user $i$.
The intensity function can be also generalized to the marked case \cite{Hawkes1971}, which a mark $p$, often a subset of $\mathbb{N}$ or $\mathbb{R}$, is associated with each event.
\begin{align*}
\lambda^*(t,p) = \lambda^*(t) f^*(p|t),
\end{align*}
where $f^*(p|t)$ is the conditional mark density function. In the sequel, we omit the star superscript of intensity for notational simplicity.
The mutually-exciting property of the Haweks process makes it a common modeling tool in applications like seismology, epidemiology, reliability, and social network analysis \cite{farajtabar2015back}.

\vspace{-2mm}
\subsection{Correlated Cascades Model}
Suppose we are given a directed network $\mathcal{G}=(\mathcal{V},\mathcal{E})$, with $\left\vert\mathcal{V}\right\vert=N$ nodes and $M$ behaviors (cascades). Nodes of the network can adopt at most one of them in any time.
We denote the \emph{user behavior adoption} by  $\mathcal{D} = \{(t_i, u_i, p_i)\}_{i=1}^{K}$, where each triple $(t_i, u_i, p_i)$ means that user $u_i$ has adopted behavior $p_i$ at time $t_i$. 
We can also define the observations related to user $v$ and behavior (product) $q$ up to time $s$, as $\mathcal{D}_v^q(s) = \left \{ (t_i,u_i,p_i) \in \mathcal{D} \vert t_i<s, u_i=v, p_i=q \right\}$, and define $\mathcal{D}_v(s)$, $\mathcal{D}^q(s)$ and $\mathcal{D}(s)$ in a similar way.

Now the question is, how users in a network decide to adopt a behavior? This is an important question in sociology which has been investigated for decades \cite{Granovetter1973}.
According to social reinforcement theory, the behavior of a user is influenced by her friends \cite{Mcadam1993}. Moreover, each user has behavioral biases  \cite{Farajtabar2014,farajtabar2016multistage}.
These two mechanisms can be well modeled by the Hawkes process.
The mutually-exciting property of the Hawkes process can model the social reinforcement and the base rate can model the bias. Also, the time decaying kernel reflects the diminishing effect of past events.
So, we model the behavior adoption intensity of the user $u$ by: 
\begin{align}\label{equ:intensity}
\lambda_u(t) = \underbrace{\mu_u}_{\text{bias}} + \underbrace{\sum_{i=1}^{\vert\mathcal{D}(t)\vert}{  \alpha_{u_i u} \;e^{-(t-t_i)} } }_{\text{social reinforcement}}
\end{align}
where $\mu_u$ is the base intensity of user $u$ or bias, $\alpha_{ji}$, an element of the latent diffusion network, is the influence of user $j$ on $i$,
and the summation is over the elements of the set $\mathcal{D}(t)$.
The type of adopted behavior can be seen as the mark of the Hawkes process. Therefore, the intensity of user $u$ to adopt product or behavior $p$ is modeled by:
$
\lambda_u(t,p) = \lambda_u(t) f_u(p|t),
$
where $f_u(p|t)$ is the probability that user $u$ adopts behavior $p$ at time $t$ given history $\mathcal{D}(t)$.
To model the mark probability we define the tendency of user $u$ to adopt behavior $p$ as:
\begin{align}\label{equ:excitationfunc}
g_u^p(t) = \mu_u^p + \sum_{i=1}^{\vert\mathcal{D}^p(t)\vert}{  \alpha_{u_i u} \;e^{-(t-t_i)} }.
\end{align}
Intuitively when a user decides to select a behavior she picks the one with maximum tendency among the different behaviors, $\arg\max_p g_u^p(t)$. The probabilistic version of the max function is the \textit{soft-max} function, so we propose
\begin{align}\label{equ:mark}
f_u(p|t) = \frac{\exp(\beta g_u^p(t))}{\sum_q \exp(\beta g_u^q(t))},
\end{align}
as  the nonlinear mark function, where, hyperparameter $\beta$ tunes the mark function.
In the fully competitive case where $\beta\rightarrow\infty$, it converges to deterministic max function, and in the fully cooperative where $\beta \rightarrow 0$, it converges to the uniform density function. 
In the case of linear mark function:
\begin{align}\label{equ:mark-linear}
f_u(p|t) = \frac{ g_u^p(t) }{\sum_q  g_u^q(t)}
\end{align}
the user behavior intensity simplifies to
$\lambda_u(t,p) = g_u^p(t)$.
By decomposing the model likelihood to the product of cascade likelihoods, we can show that it reduces to the independent cascade model \cite{Rodriguez2011}. 
To find the observation likelihood of the proposed model we use the following proposition.
\begin{proposition}
\label{prop1}
For $u=1,2,\ldots,N$, let $N_u$ be a multi-dimensional marked point process on $[0,T]$ with associated intensity $\lambda_u(t)$, and mark density $f_u(p|t)$. Let 
$\mathcal{D}=\{(t_i,u_i,p_i)\}_{i=1}^K$ be a time, user and mark realization of the process over $[0,T]$. Then the likelihood of $\mathcal{D}$ the multidimensional Hawkes process model with
 mutually-exciting parameter $\bm{A} = [\alpha_{ij}]$ and baseline parameter $\bm{\mu} = [\mu_i^p]$, $(i,j=1,2,\cdots,N, p=1,2,\cdots,M)$ is:
\begin{align*}
\mathcal{L}(\theta \vert \mathcal{D}) = \left[\prod_{i=1}^K \lambda_{u_i}(t_i) f_{u_i}(p_i|t_i) \right] \exp\left(-\int_0^T \sum_{u=1}^N \lambda_u(s) ds \right)
\end{align*}
where $\theta = (\bm{\mu},  \bm{A})$ represents the model parameters.
\end{proposition}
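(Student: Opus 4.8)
The plan is to build the likelihood sequentially over the ordered event times $0 < t_1 < \cdots < t_K \le T$ by applying the chain rule of probability recalled earlier, $f(t_1,\ldots,t_K) = \prod_i f(t_i \mid \mathcal{H}_i)$, and then enriching each conditional factor so that it also encodes which user fired and which mark was drawn. The backbone of every factor will be the intensity-to-density relation already quoted, $f^*(t) = \lambda^*(t)\exp(-\int \lambda^*(s)\,ds)$.

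First I would collapse the $N$-dimensional process into a single superposed point process whose conditional intensity is the total rate $\lambda(t) = \sum_{u=1}^N \lambda_u(t)$. Ignoring users and marks for the moment, the conditional density that the $i$-th event time is $t_i$ given the history is, by the relation above applied to this superposition, $\lambda(t_i)\exp(-\int_{t_{i-1}}^{t_i}\lambda(s)\,ds)$, where $t_0 = 0$.

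Next I would attach the user label. Conditioned on an event occurring at $t_i$, the probability that it belongs to dimension $u_i$ is, by a competing-risks (thinning) argument, the ratio $\lambda_{u_i}(t_i)/\lambda(t_i)$ of the instantaneous rate of $u_i$ to the total rate. Multiplying this into the time-density factor cancels $\lambda(t_i)$ and leaves $\lambda_{u_i}(t_i)\exp(-\int_{t_{i-1}}^{t_i}\lambda(s)\,ds)$. The mark is then incorporated by a further multiplication by the conditional mark density $f_{u_i}(p_i\mid t_i)$, since given the time and the dimension the mark is drawn from this density.

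Finally I would take the product of the per-event factors over $i = 1,\ldots,K$, append the survival factor $\exp(-\int_{t_K}^T \lambda(s)\,ds)$ for the absence of any event in the terminal interval $(t_K,T]$, and observe that the exponents telescope: $\sum_{i=1}^K \int_{t_{i-1}}^{t_i}\lambda(s)\,ds + \int_{t_K}^T \lambda(s)\,ds = \int_0^T \lambda(s)\,ds = \int_0^T \sum_{u=1}^N \lambda_u(s)\,ds$. Collecting the surviving rate factors $\lambda_{u_i}(t_i)$ together with the mark densities $f_{u_i}(p_i\mid t_i)$ then reproduces the claimed expression. The step I expect to require the most care is the competing-risks decomposition $\lambda_{u_i}(t_i)/\lambda(t_i)$: it rests on the fact that in an infinitesimal interval at most one dimension fires and that the probability of simultaneous events in two dimensions is $o(dt)$, so that the total survival and the per-dimension attribution factorize cleanly. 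That is where the genuine content lies, whereas the telescoping of integrals and the mark factorization are routine.
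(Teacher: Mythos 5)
Your proof is correct, but it takes a genuinely different route from the paper's. You superpose the $N$ dimensions into a single process with total intensity $\lambda(t)=\sum_{u=1}^N\lambda_u(t)$, write each global event's contribution as a time density $\lambda(t_i)\exp\bigl(-\int_{t_{i-1}}^{t_i}\lambda(s)\,ds\bigr)$ times a competing-risks attribution $\lambda_{u_i}(t_i)/\lambda(t_i)$ times the mark density, and let the exponents telescope over the global event order. The paper never superposes: it applies the chain rule, regroups the factors \emph{by user}, writes each factor as $\lambda_u(t_i)\exp\bigl(-\int_{t_{i-1}}^{t_i}\lambda_u(s)\,ds\bigr)f_u(p_i\vert t_i)$ with the integral running between successive events \emph{of that same user} (while $\lambda_u$ still conditions on the full history), telescopes per user up to that user's last event, appends a per-user survival term $S(T,u)=\exp\bigl(-\int_{t_{\vert\mathcal{D}_u\vert}}^T\lambda_u(s)\,ds\bigr)$, and only at the end multiplies the per-user exponentials $\exp\bigl(-\int_0^T\lambda_u(s)\,ds\bigr)$ to recover the summed compensator. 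Both derivations rest on the same underlying fact---the likelihood of a multivariate point process factorizes across dimensions when each $\lambda_u$ is conditioned on the joint history---which surfaces in your version as the thinning step (correctly flagged as requiring orderliness, i.e.\ probability $o(dt)$ of two dimensions firing in the same infinitesimal interval) and in the paper's version as the per-dimension regrouping, which is asserted rather than argued and is arguably the less transparent step there. Your route makes the total integral $\int_0^T\sum_u\lambda_u$ appear immediately and isolates the one probabilistic subtlety cleanly; the paper's route buys an explicit per-user factorization of the likelihood, which is precisely what it exploits two displays later to decompose $\log\mathcal{L}(\theta\vert\mathcal{D})$ into $\sum_u\log\mathcal{L}(\theta_u\vert\mathcal{D}_u)$ and solve the learning problem in parallel per user.
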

\noindent \textit{Proof}.
Using chain rule, the probability of observation is:
\begin{align*}
\mathcal{L}(\theta \vert \mathcal{D}) := f(\mathcal{D} \vert \theta) = \prod_{i=1}^K f\left((t_i,u_i,p_i)| \mathcal{D}(t_i)\right) \prod_{u=1}^{N} S(T,u) 
\end{align*}
where $t_0=0$ and $S(T,u)$ is the probability that the process $\lambda_u(t)$ survive after its last event:
\begin{align}
S(T,u) = \exp\left(-\int_{t_{\vert\mathcal{D}_u\vert}}^T  \lambda_{u}(s) ds\right) \nonumber
\end{align}
by decomposing the probability of observation we have:
\allowdisplaybreaks
\begin{align*}
&f(\mathcal{D} \vert \theta) 
= \prod_{u=1}^{N}	\prod_{i=1}^{\vert\mathcal{D}_u\vert} f\left((t_i,u_i,p_i)| \mathcal{D}(t_i)\right) \prod_{u=1}^{N} S(T,u)  \\
&= \prod_{u=1}^{N}	\prod_{i=1}^{\vert\mathcal{D}_u\vert} \lambda_{u}(t_i) \exp\left(-\int_{t_{i-1}}^{t_i} \lambda_{u}(s) ds\right) f_{u}(p_i | t_i)
 \prod_{u=1}^{N} S(T,u)  \\
&= \prod_{u=1}^{N} \exp\left(-\int_0^{t_{\vert\mathcal{D}_u\vert}} \lambda_{u}(s) ds\right)  \prod_{i=1}^{\vert\mathcal{D}_u\vert} f_{u}(p_i | t_i) \lambda_{u}(t_i)  \prod_{u=1}^{N} S(T,u)  \\
&= \prod_{u=1}^{N} \exp\left(-\int_0^{t_{\vert\mathcal{D}_u\vert}}  \lambda_{u}(s) ds\right) S(T,u)
\prod_{i=1}^{\vert\mathcal{D}_u\vert} f_{u}(p_i | t_i) \lambda_{u}(t_i) \\
&= \prod_{u=1}^{N} \exp\left(-\int_0^T  \lambda_{u}(s) ds\right) 
\prod_{i=1}^{\vert\mathcal{D}_u\vert} f_{u}(p_i | t_i) \lambda_{u}(t_i) 
\\
&= \prod_{u=1}^{N} \exp\left(-\int_0^T  \lambda_{u}(s) ds\right) 
\prod_{u=1}^{N} \prod_{i=1}^{\vert\mathcal{D}_u\vert} f_{u}(p_i | t_i) \lambda_{u}(t_i) \quad\Box 
\end{align*}
According to this proposition and relations (\ref{equ:intensity})-(\ref{equ:mark}), the log-likelihood of the model can be written as:
\begin{align*}
\log\mathcal{L} & (\theta \vert \mathcal{D})  =\sum_{i=1}^{|\mathcal{D}|}\log  \lambda_{u_i}(t_i) - \sum_{u=1}^N\int_0^T \lambda _u(s) ds  \nonumber \\
&+\sum_{i=1}^{|\mathcal{D}|}  \beta g_{u_i}^{p_i}(t_i) 
-\sum_{i=1}^{|\mathcal{D}|} \log \left(  \sum_{q=1}^M \exp\left(\beta g^q_{u_i}(t_i)\right) \right)
\end{align*}
where $\beta$ is the hyper-parameter. We can decompose the summation over $t_i\in\mathcal{D}$ into the summation over $u$ and $t_i\in\mathcal{D}_u$, which shows that the log-likelihood can be decomposed to sum of users log-likelihood:
\begin{align*}
\log\mathcal{L} (\theta \vert \mathcal{D}) = \sum_{u=1}^N \log\mathcal{L} (\theta_u \vert \mathcal{D}_u)
\end{align*} 
where the parameters of user $u$, $\theta_u$ is composed of $\bm{A}_u = [\alpha_{u\Cdot}]$ and $\bm{\mu}_u=[\mu_{u}^{\Cdot}]$. Moreover the user's log-likelihood is: 
\begin{align*}
\log\mathcal{L}  &(\theta_u \vert\mathcal{D}_u) =\sum_{i=1}^{|\mathcal{D}_u|}\log  \lambda_{u}(t_i) - \int_0^T \lambda _u(s) ds  \nonumber \\
&+\sum_{i=1}^{|\mathcal{D}_u|}  \beta g_{u}^{p_i}(t_i) 
-\sum_{i=1}^{|\mathcal{D}_u|} \log \left(  \sum_{q=1}^M \exp\left(\beta g^q_u(t_i)\right) \right).
\end{align*}
\begin{lemma}
		\label{lem1}
		Function $f:\mathbb{R}^n \rightarrow \mathbb{R}$ with $\textnormal {dom}f = \mathbb{R}^n$ 	is convex.
		\begin{align*}
		f(x)=\log \sum_i \exp(a_i^Tx+b_i)
		\end{align*}		
\end{lemma}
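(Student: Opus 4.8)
The plan is to verify convexity through the second-order condition: since $\textnormal{dom}\,f=\mathbb{R}^n$ is convex and $f$ is smooth (a logarithm of a strictly positive sum of exponentials), it suffices to show that the Hessian $\nabla^2 f(x)$ is positive semidefinite at every $x$. First I would introduce the abbreviations $z_i=\exp(a_i^Tx+b_i)$ and $Z=\sum_i z_i>0$, so that $f(x)=\log Z$. Differentiating once gives $\nabla f(x)=Z^{-1}\sum_i z_i a_i$, and differentiating a second time (quotient and product rules) yields
\begin{align*}
\nabla^2 f(x)=\frac{1}{Z}\sum_i z_i a_i a_i^T-\frac{1}{Z^2}\Bigl(\sum_i z_i a_i\Bigr)\Bigl(\sum_i z_i a_i\Bigr)^T.
\end{align*}

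The key step is to show this matrix is PSD. I would introduce the weights $p_i=z_i/Z$, so that $p_i\ge 0$ and $\sum_i p_i=1$, and test the Hessian against an arbitrary $v\in\mathbb{R}^n$, writing $w_i=a_i^Tv$. The quadratic form then collapses to
\begin{align*}
v^T\nabla^2 f(x)\,v=\sum_i p_i w_i^2-\Bigl(\sum_i p_i w_i\Bigr)^2,
\end{align*}
which is exactly the variance of the random variable taking value $w_i$ with probability $p_i$, hence nonnegative. Since $v$ was arbitrary, $\nabla^2 f(x)\succeq 0$, and as this holds for all $x$ the function $f$ is convex.

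The only genuine obstacle is recognizing the quadratic form as a variance (equivalently, as a one-line application of the Cauchy--Schwarz inequality to the vectors with entries $\sqrt{p_i}$ and $\sqrt{p_i}\,w_i$); the differentiation itself is routine bookkeeping. If one prefers to sidestep the Hessian entirely, I would keep in reserve a direct check of the convexity inequality via Hölder: setting $u_i=\exp(a_i^Tx+b_i)$ and $v_i=\exp(a_i^Ty+b_i)$, Hölder's inequality with conjugate exponents $1/\theta$ and $1/(1-\theta)$ gives $\sum_i u_i^{\theta}v_i^{1-\theta}\le(\sum_i u_i)^{\theta}(\sum_i v_i)^{1-\theta}$, and taking logarithms recovers $f(\theta x+(1-\theta)y)\le\theta f(x)+(1-\theta)f(y)$ with no smoothness assumption at all.
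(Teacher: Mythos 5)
Your proof is correct and takes essentially the same route as the paper: both verify the second-order condition by computing the Hessian of the log-sum-exp function and showing the resulting quadratic form, $\sum_i p_i w_i^2 - \bigl(\sum_i p_i w_i\bigr)^2 \geq 0$, is nonnegative via Cauchy--Schwarz (your variance reading is just a cleaner interpretation of the paper's identical final inequality). The H\"older argument you keep in reserve is a genuinely different, derivative-free alternative, but since it is only an aside your main argument matches the paper's proof step for step.
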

\begin{proof}
	Let $A=\left [a_1,a_2,\cdots,a_n  \right ]^T$, $b=\left [b_1,b_2,\cdots,b_n \right ]^T$ and $z_i=\exp(a_i^Tx+b_i)$, using chain rule we have:
	\begin{align*}
	\nabla^2 f(x) &= A^T \left( \frac{1}{\bm{1}^T z} \text{diag}(z)-\frac{1}{(\bm{1}^Tz)^2} z z^T \right ) A
	\end{align*}

	Now we must show that for all $u$ we have $u^T\nabla^2 f(x) u \geq 0$, or equivalently for all $v$, where $v =Au$ we have:
	\begin{align*}
	v^T \left( \frac{1}{\bm{1}^T z} \text{diag}(z)-\frac{1}{(\bm{1}^Tz)^2} z z^T \right ) v & \geq 0 \\
	\frac{\sum_i v_i^2 z_i }{\sum_i z_i} - \left ( \frac{\sum_i v_i z_i }{\sum_i z_i} \right )^2 & \geq 0 
	\end{align*}
	which holds according to Cauchy-Schwarz inequality.
\end{proof}
\begin{figure}
\centering
\includegraphics[width=0.169\textwidth]{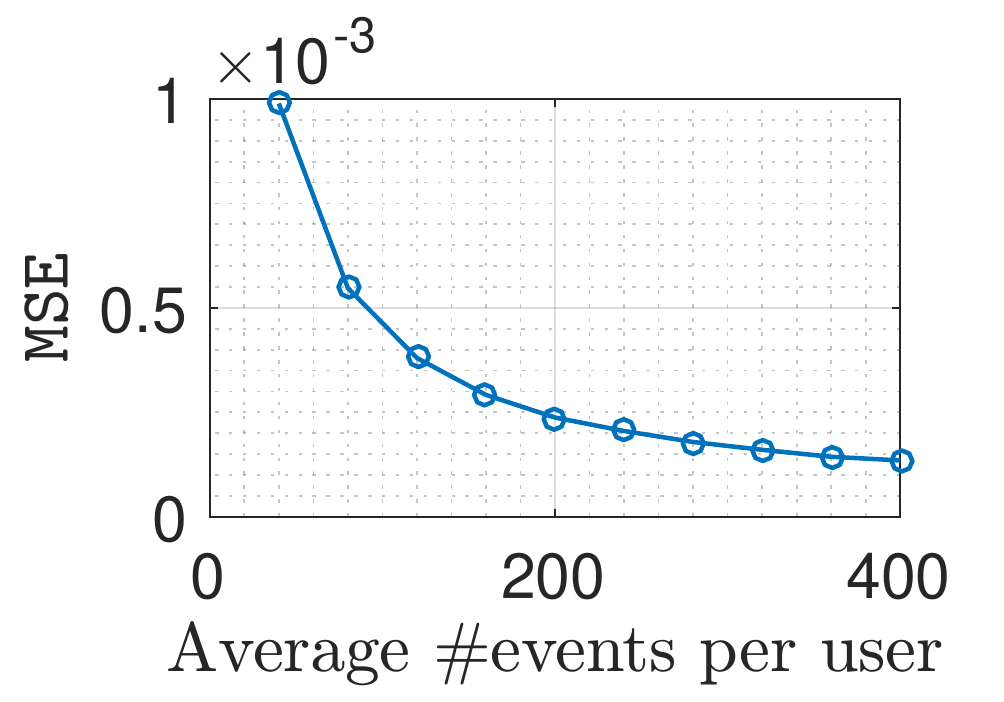}\hspace{-3mm}
\includegraphics[width=0.160\textwidth]{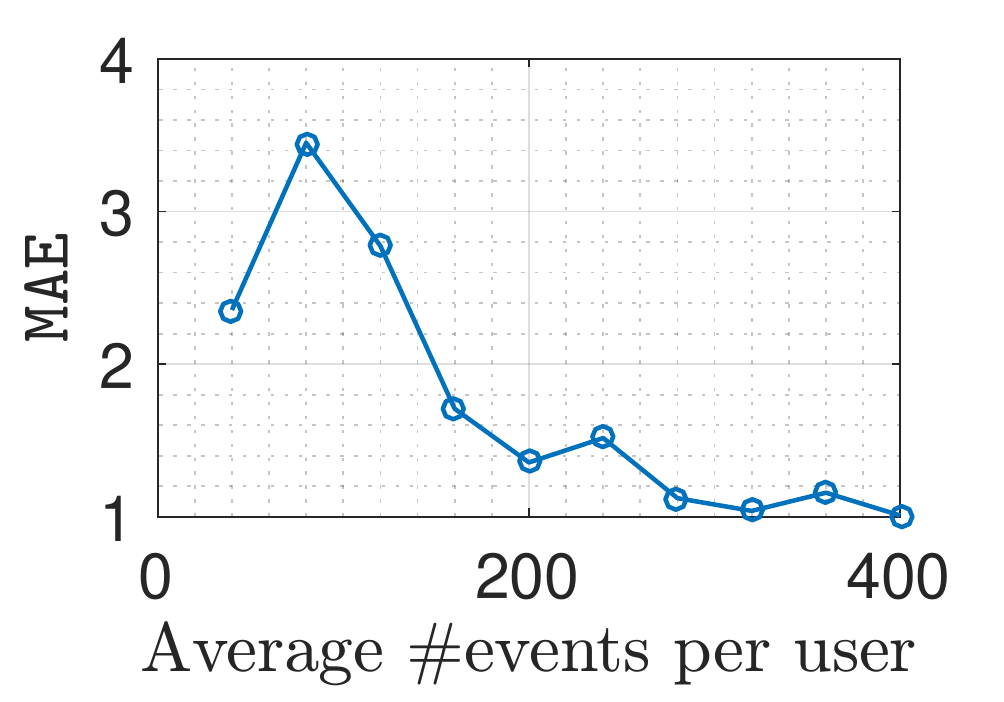}\hspace{-3mm}
\includegraphics[width=0.160\textwidth]{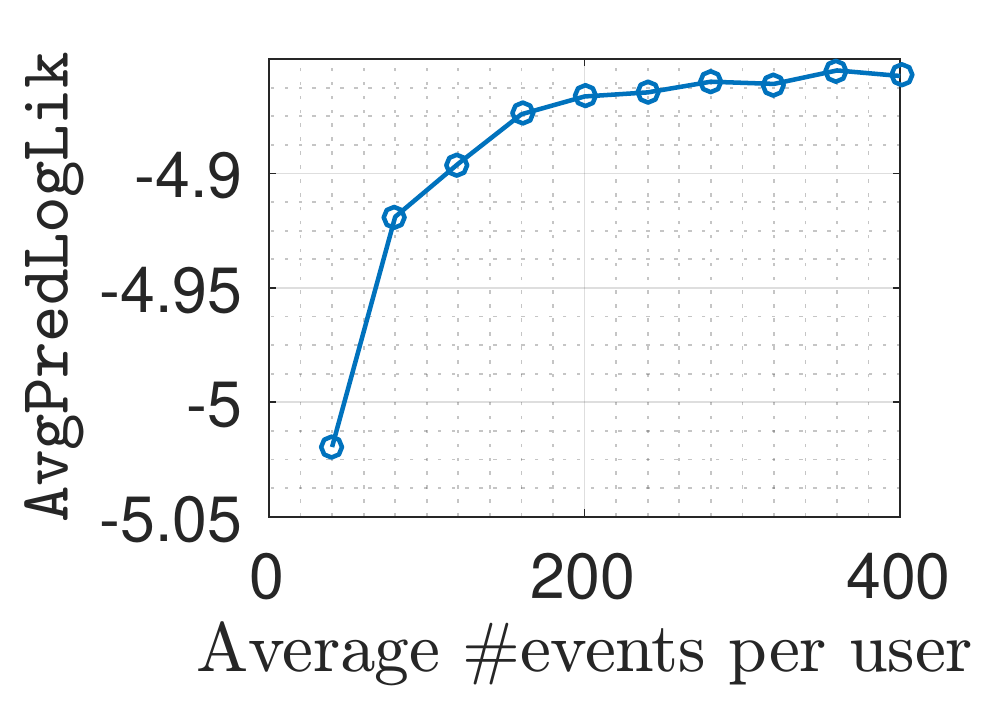}
\vspace{-4mm}
\caption{Performance of the parameter learning on synthetic data.}
\label{fig:synthetic01}
\vspace{-4mm}
\end{figure}

\begin{figure*}
\centering
  \begin{tabular}{ccccc}
   &\small Independent & \small Correlated, $\beta=0.1$ & \small Correlated, $\beta=1$ &  \small Correlated, $\beta=100$ \\ 
  \rotatebox{90}{\hspace{1.2cm}\small Intensity}  
&\includegraphics[width=0.21\textwidth]{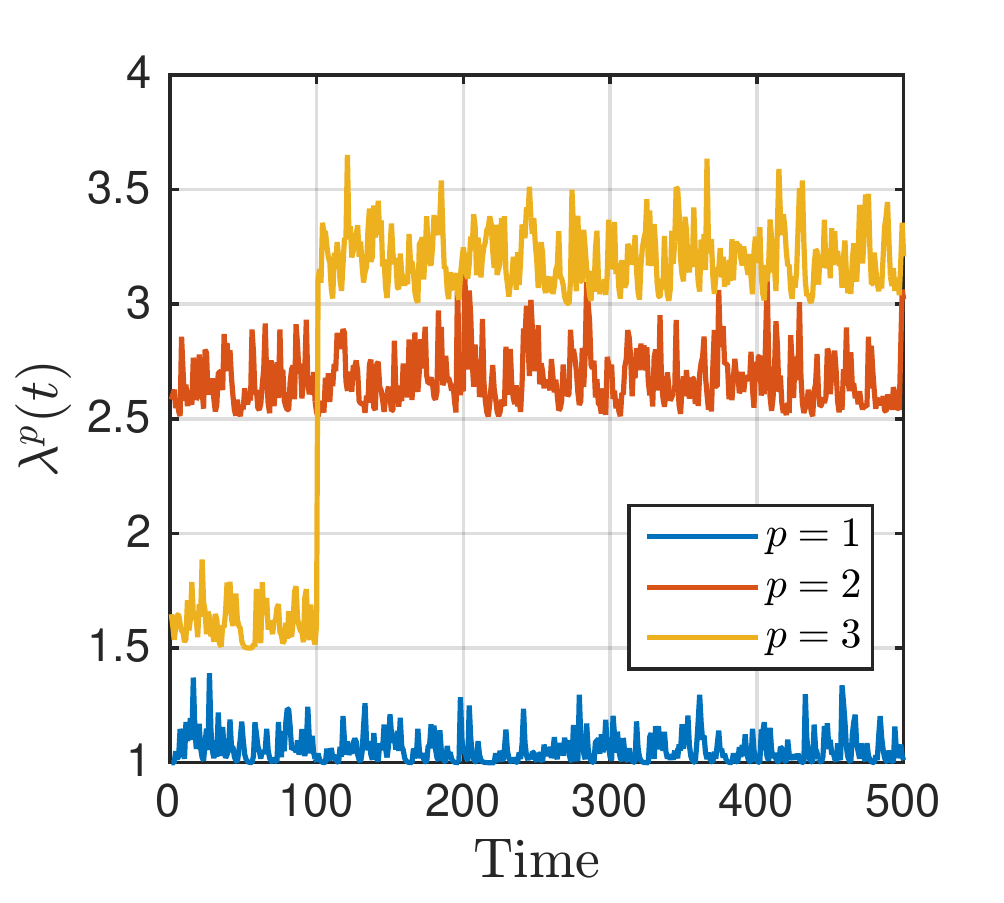}
&\includegraphics[width=0.21\textwidth]{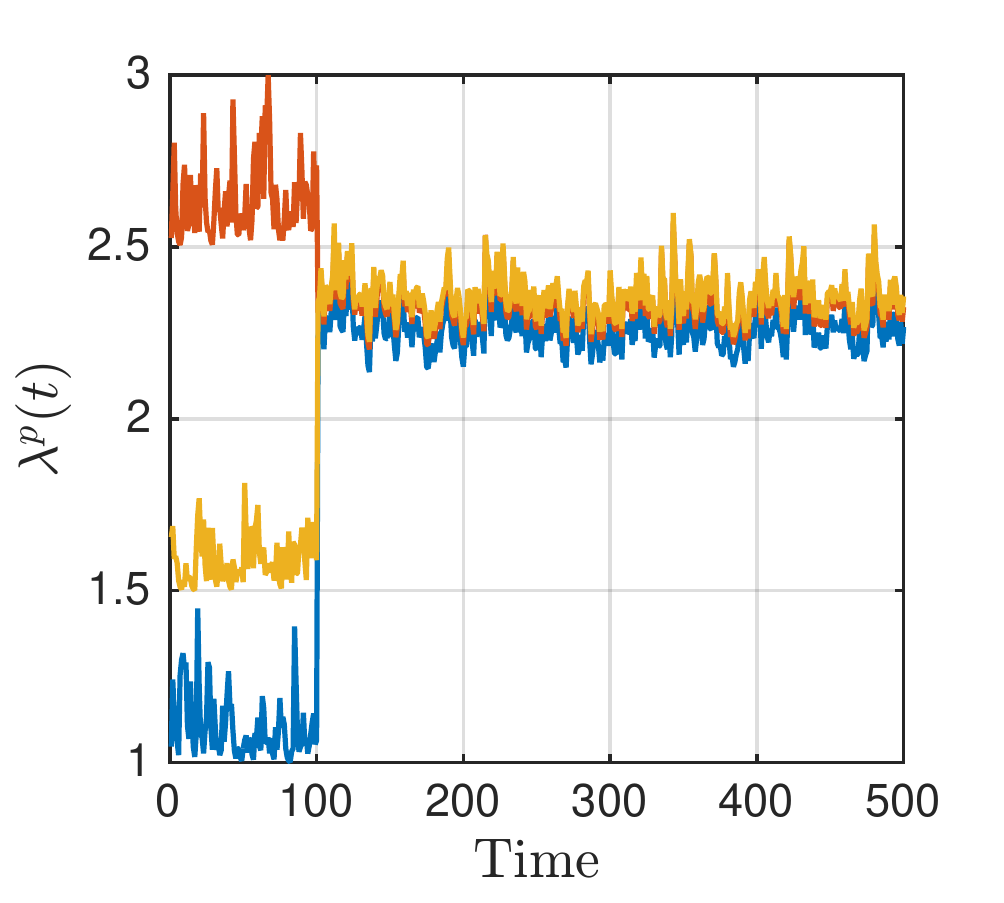}
&\includegraphics[width=0.21\textwidth]{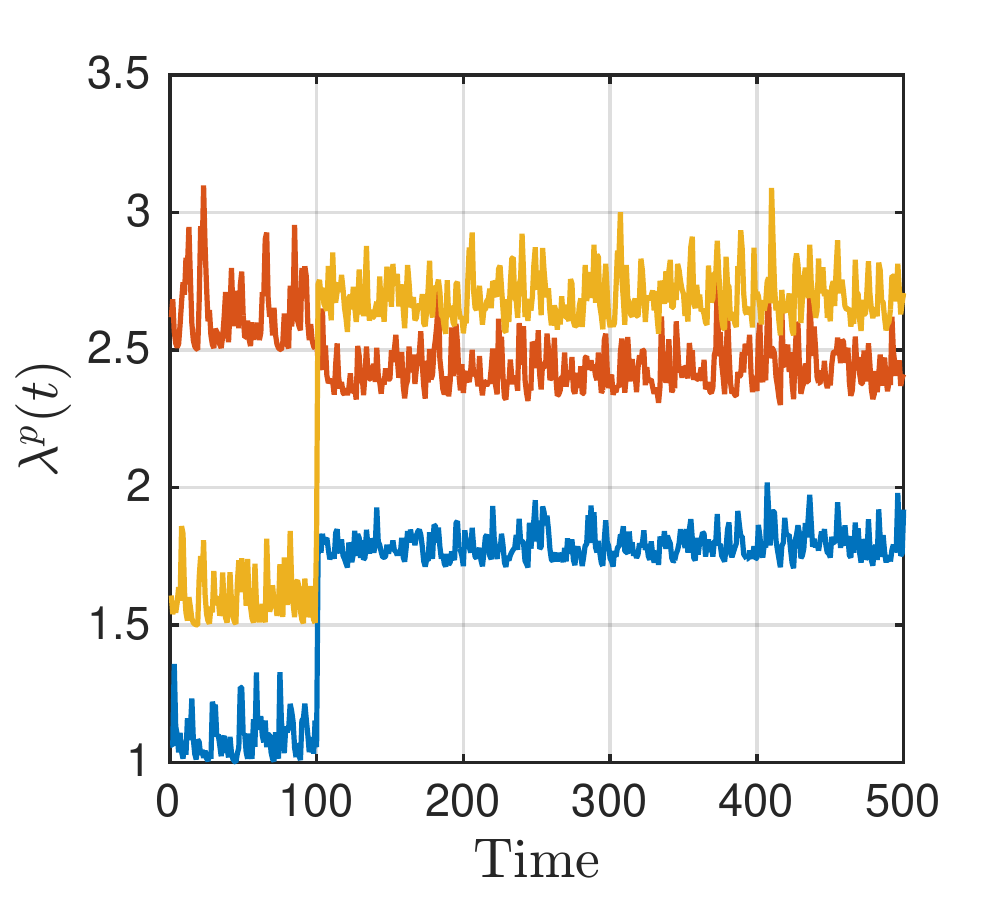}
&\includegraphics[width=0.21\textwidth]{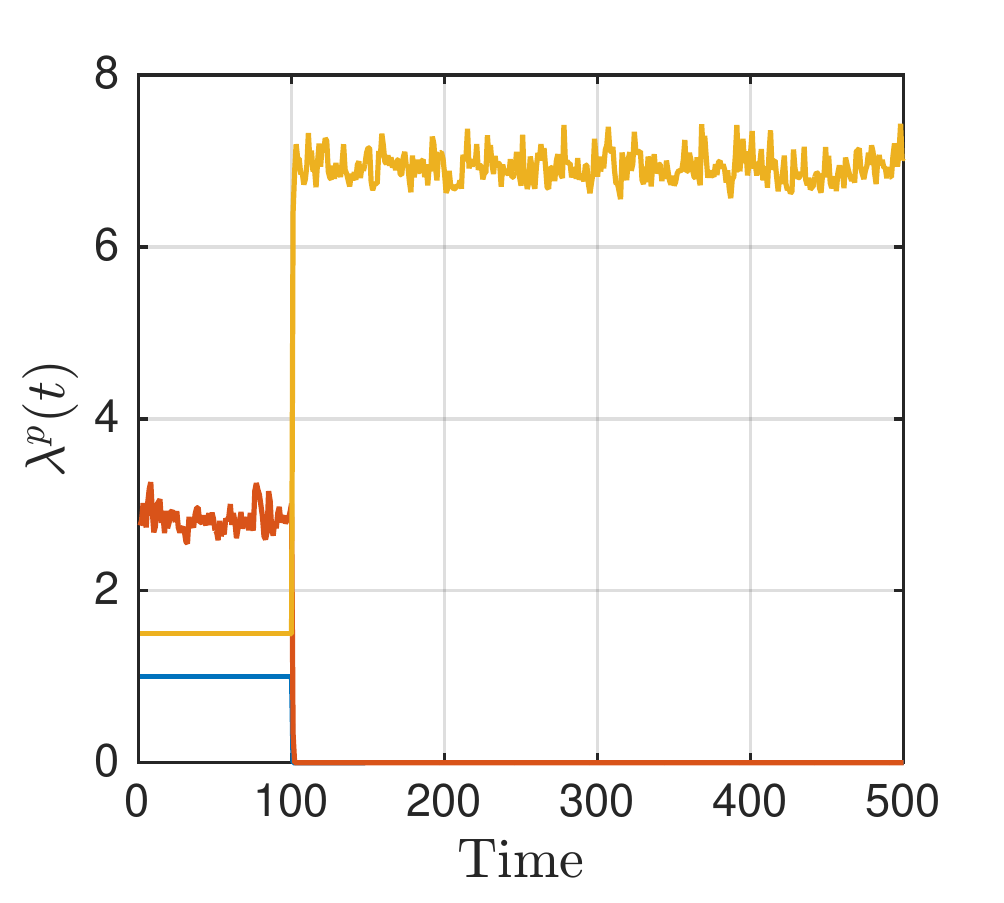}\\
 \rotatebox{90}{\hspace{1.2cm}\small Market Share}
&\includegraphics[width=0.21\textwidth]{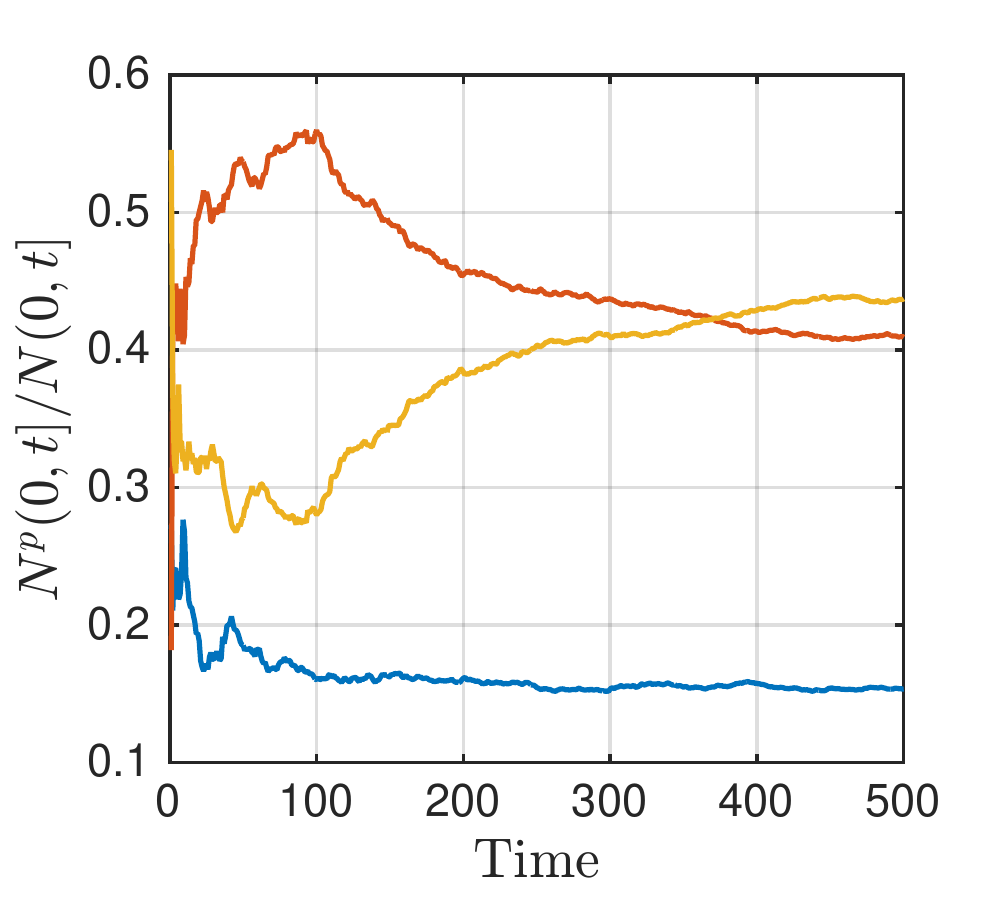}
&\includegraphics[width=0.21\textwidth]{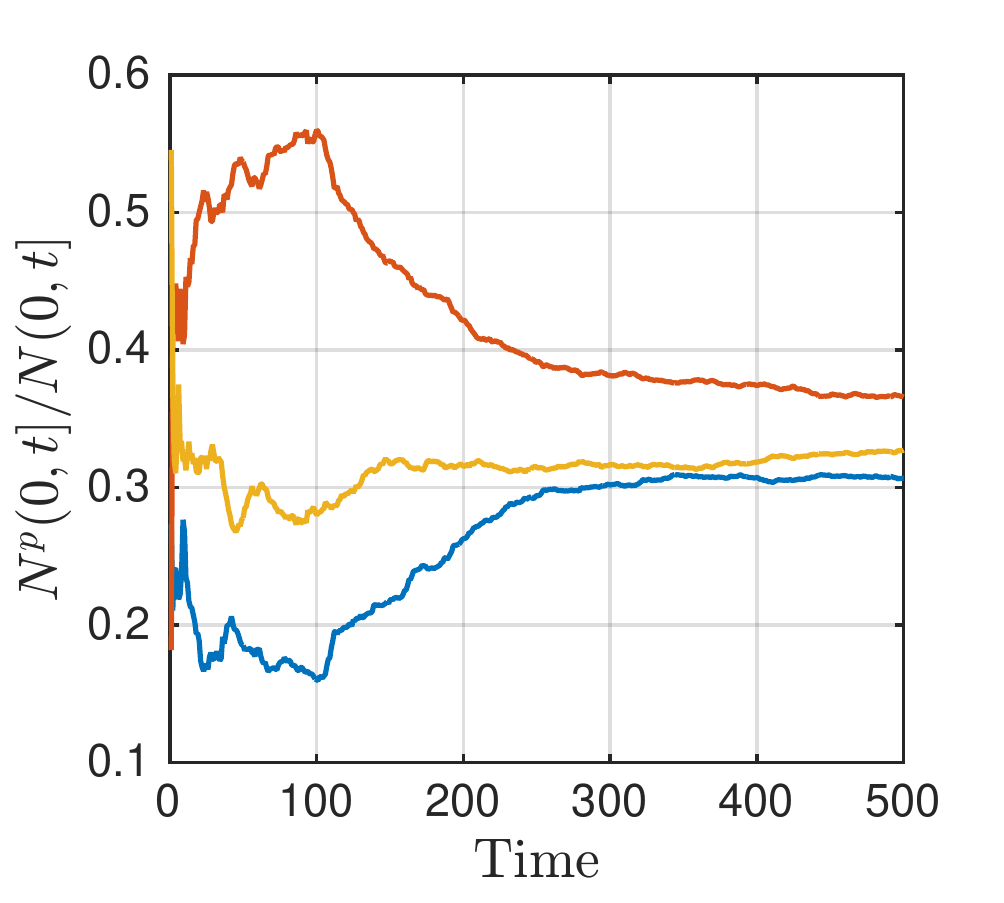}
&\includegraphics[width=0.21\textwidth]{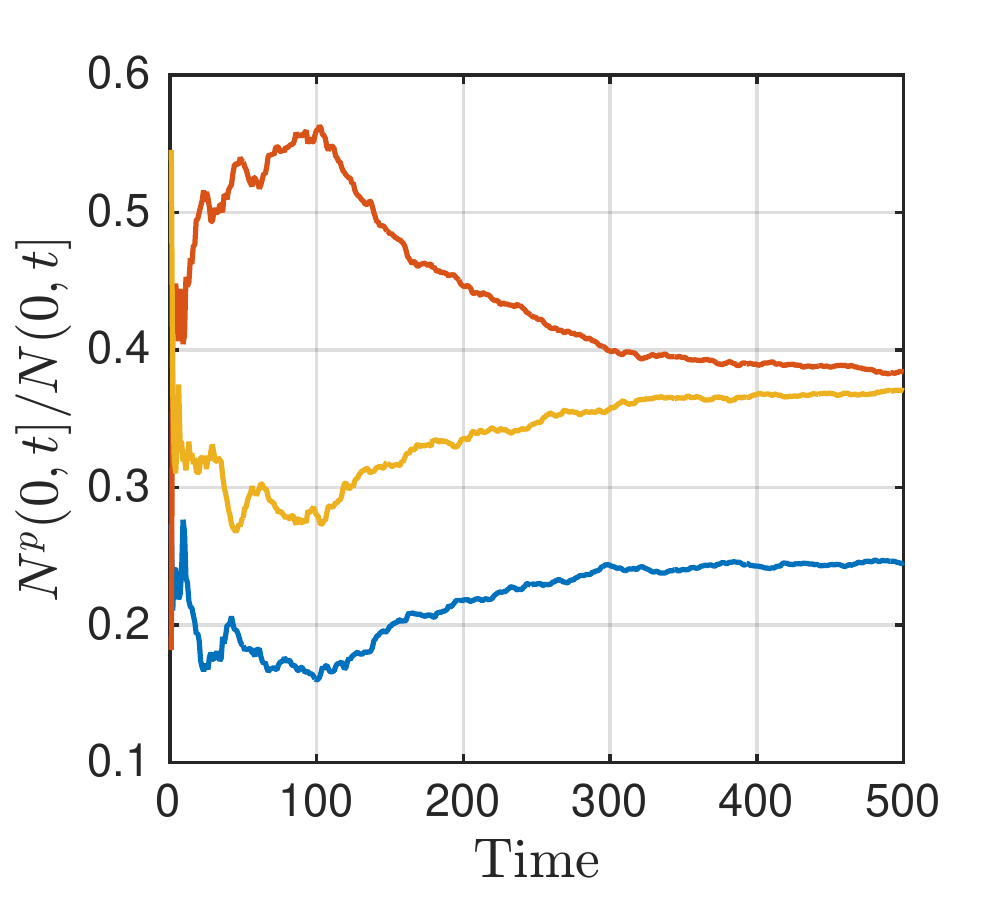}
&\includegraphics[width=0.21\textwidth]{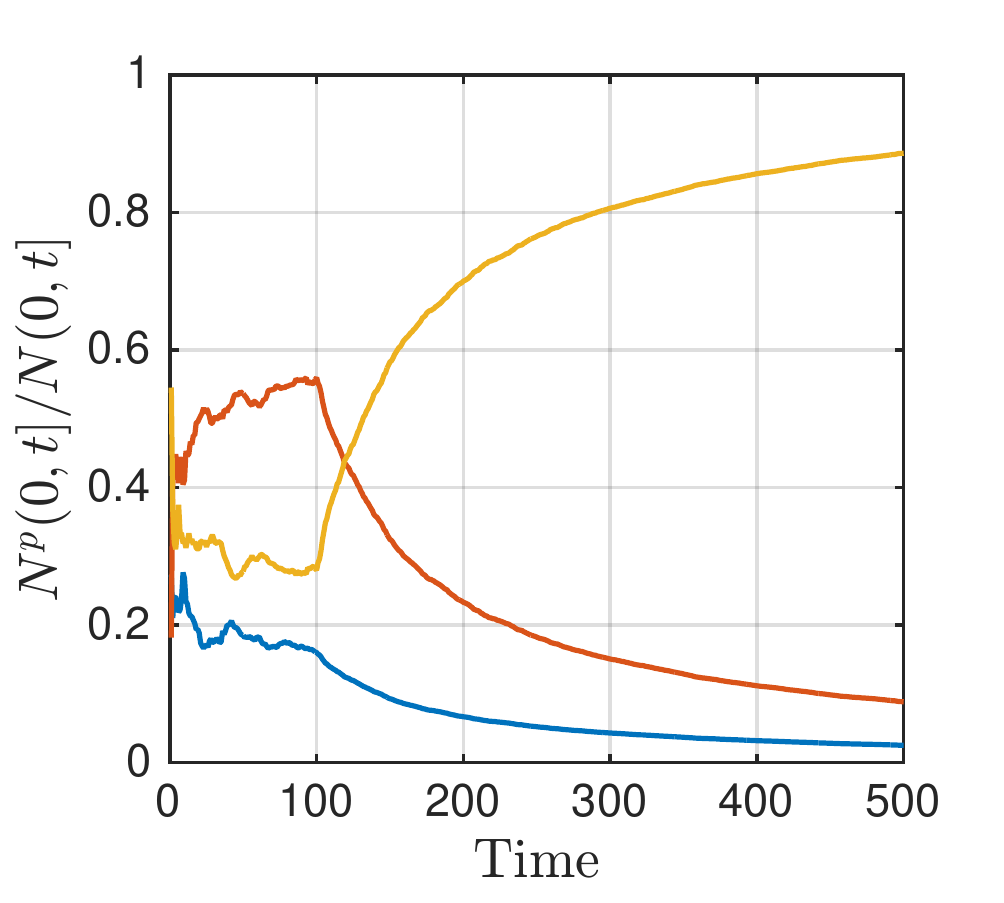} 
 \end{tabular} \vspace{-5mm}
\caption{Intensity and market share for independent and different correlated models. In correlated models, after incentivization in time 100, the product usage of the other products also change, whereas in independent models they remain intact.}
\vspace{-4mm}
\label{fig:synthetic02}
\end{figure*}
We continue with the following proposition which establishes the tractability of parameter learning and allows us to identify the model efficiently.

\begin{proposition}
The negative of the log-likelihood function, $-\log\mathcal{L} (\theta_u \vert \mathcal{D}_u)$ is convex.
\end{proposition}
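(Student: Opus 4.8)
The plan is to prove the equivalent statement that $\log\mathcal{L}(\theta_u \vert \mathcal{D}_u)$ is \emph{concave} in $\theta_u = (\bm{\mu}_u, \bm{A}_u)$, since negating a concave function yields a convex one. First I would record the single structural fact that drives everything: both the intensity $\lambda_u(t)$ in \eqref{equ:intensity} and the tendency $g_u^p(t)$ in \eqref{equ:excitationfunc} are \emph{affine} in the parameters. Indeed, at any fixed evaluation time the exponential weights $e^{-(t-t_i)}$ are known positive constants, so $\lambda_u(t_i)$ and $g_u^q(t_i)$ are just constants plus linear combinations of the $\mu$'s and $\alpha$'s.

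With this in hand I would split the user log-likelihood into its four summands and treat each separately, using that a sum of concave functions is concave. The term $-\int_0^T \lambda_u(s)\,ds$ and the term $\sum_i \beta g_u^{p_i}(t_i)$ are affine in $\theta_u$ (an integral, respectively a finite sum, of affine functions), hence both concave. The term $\sum_i \log\lambda_u(t_i)$ is the composition of the concave, increasing function $\log$ with an affine map, which is concave on the region where the intensities stay positive; this positivity is exactly the feasible region enforced by the barrier method at optimization time.

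The only genuinely nontrivial term is the soft-max normalizer $-\sum_i \log\big(\sum_q \exp(\beta g_u^q(t_i))\big)$, and this is precisely where Lemma~\ref{lem1} does the work. For each event $i$, the inner expression $\log\sum_q \exp(\beta g_u^q(t_i))$ is a log-sum-exp of affine functions of $\theta_u$, so by Lemma~\ref{lem1} it is convex; negating it makes it concave, and summing over $i$ preserves concavity. Collecting the four pieces, $\log\mathcal{L}(\theta_u \vert \mathcal{D}_u)$ is a sum of concave functions and is therefore concave, so $-\log\mathcal{L}(\theta_u \vert \mathcal{D}_u)$ is convex. I expect the main obstacle to be purely bookkeeping, namely making explicit that the decaying kernel renders every occurrence of the parameters affine so that the standard composition rules and Lemma~\ref{lem1} apply cleanly; once that affine structure is stated, no further computation is needed.
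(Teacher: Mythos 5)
Your proof is correct and follows essentially the same route as the paper's: a term-by-term decomposition of the (negative) log-likelihood, using the affine dependence of $\lambda_u$ and $g_u^q$ on $\theta_u$, standard composition rules for the $\log\lambda_u(t_i)$ term, linearity of the integral and the $\beta g_u^{p_i}(t_i)$ terms, and Lemma~\ref{lem1} for the log-sum-exp normalizer; phrasing it as concavity of $\log\mathcal{L}$ rather than convexity of $-\log\mathcal{L}$ is an immaterial difference. If anything, you are slightly more careful than the paper in making the affine structure explicit and in noting that the $\log\lambda_u(t_i)$ term requires positivity of the intensity on the feasible region.
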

\begin{proof}
The first term is the negative log of a linear function which is convex, according to composition rules. The second and third term are linear, and the fourth term is convex according to lemma \ref{lem1}.
\end{proof}
Now to find the model parameters we can use the maximum likelihood estimation:
\begin{align}\label{equ:cvx}
& \underset{\theta}{\text{minimize}} \,  - \log\mathcal{L} (\theta_u \vert \mathcal{D}_u) \nonumber
& \text{subject to }  \, \theta \geq 0
\end{align}
on each user, which has unique solution according to proposition 2 and can be solved in parallel for different users. 

\vspace{-2mm}
\section{Experiments\footnote{Implementation codes and datasets can be found at \url{https://github.com/alikhodadadi/C4}.}}\label{experiments}
\vspace{-2mm}
\subsection{Synthetic Data}
We first explain how to generate the synthetic data, then introduce the evaluation criteria. Afterward, we describe the setting for learning the model parameters. Finally, the performance of the algorithm and an experiment that is designed to show the prosperity of the correlated model with respect to its independent version is investigated.

\textbf{Dataset Preparation}.
We generated a random network with $N=50$ and $M=5$. The parameters of the models were drawn randomly from uniform distribution $\mu_{i,p} \sim U(0,0.1)$ and $\alpha_{i,j} \sim U(0,0.01)$. Also, we set $\beta=1$. Then we sampled 20,000 train events and 2000 test events from the proposed model using the thinning method \cite{Ogata1981}. 
The convex optimization is solved in parallel using the Barrier method which  transforms a constrained convex optimization to an unconstrained one.
 
\textbf{Evaluation Criteria}.
We evaluated the accuracy of learning the model parameters using 
\texttt{MSE}, the average squared error between the estimated and true parameters;
\texttt{MAE}, the averaged relative error between the  estimated and true parameters; and \texttt{AvgPredLogLik}, the negative log-likelihood over unseen test events, divided by the number of test events.

\textbf{Parameter Learning}.
We trained 10 models, on $10\%$ to $100\%$ of the synthetic training data. 
In Fig \ref{fig:synthetic01}, we have evaluated the parameter learning and reported three accuracy measures . To be compatible with the real dataset, we have plotted the measures with respect to the average number of events per user.
As expected, with the increase in number of training events the accuracy of recovering the parameters improves. 

\textbf{Correlated Cascades}. 
We also designed an experiment to compare correlated cascade with independent cascade model. We form the independent cascade model using the linear mark (Eq. \ref{equ:mark-linear}), instead of the exponential mark (Eq. \ref{equ:mark}). Then randomly generate 4 similar models with the same $\mu_{i,p}$ and $\alpha_{i,j}$, where $\alpha_{i,j} \sim U(0,0.1)$, and $\mu_{i,p}$ for all users were generated with small noise around $0.2$, $0.5$ and $0.3$ for product $p=1,2$ and $3$, respectively. The number of nodes and products are also set to $N=50$, and $M=3$, respectively. In the correlated models, we set $\beta=0.1,1,100$ to see the effect of mark function on the competitive or cooperative behavior of the proposed model. 
To show the success of our method in generating the correlated cascades, we design a simple incentivization scenario. For all models, the history of  events before time $100$ is generated by the independent model. Then the parameters $\mu_u^p$ of product $p=3$ for all users is doubled, which can be regarded as an incentivization of users by the third service provider. Afterward, each model generates its events separately.
In Fig. \ref{fig:synthetic02} the overall intensity of all users for each product, $\lambda^p(t)$ and the cumulative market share of each product, ${N^p(0,t]}/{\sum_q N^q(0,t]}$ is illustrated.
From the intensity diagram we can see that, after the incentivization, the intensity of users in the correlated model with $\beta=0.1$, becomes approximately the same. But in highly competitive model with $\beta=100$, the third product is dominated shortly after the incentivization which can be seen also from the market share diagrams of Fig \ref{fig:synthetic02}.
To better understand differences between the independent and correlated model, note the intensity of independent model, row and column one of Fig \ref{fig:synthetic02}. In independent model, by incentivizing product $p=3$, its intensity increases but the other two product  are not influenced by this change. 
Also the adoption of incentivized product is increased, but the intensity diagram clearly shows that the other two product are not affected by this change. On the contrary in the correlated models, this change affects on the usage of all other products, which validates the correlated nature of our model. 

\vspace{-1mm}
\subsection{Real Data}
\vspace{-1mm}
In this section, we introduce the real datasets, then explain the evaluation criteria and the settings for parameter learning. Finally, we present the results and the comparisons.

\textbf{Datasets Preparation}.
We use the data crawled from Twitter \cite{Hodas2014}.
This dataset is composed of 213K tweets which contain URLs that shortened by URL shoehorning services. The data was collected over three weeks in Fall of 2010 and is comprised of almost 2K distinct URLs. 
We post-process this dataset by first finding the six most popular ULR shortening services, which are {\small\texttt{bit.ly}, \texttt{migre.me}, \texttt{tinyurl.com}, \texttt{tiny.ly}, \texttt{j.mp}}, and {\small\texttt{is.gd}}. Then, we select a collection of tweets of about 1000 users with at least 100 tweets which contain any of the mentioned URLs. We refer to this dataset by ``Twitter URL dataset".
\begin{figure}
\centering
\includegraphics[width=0.23\textwidth]{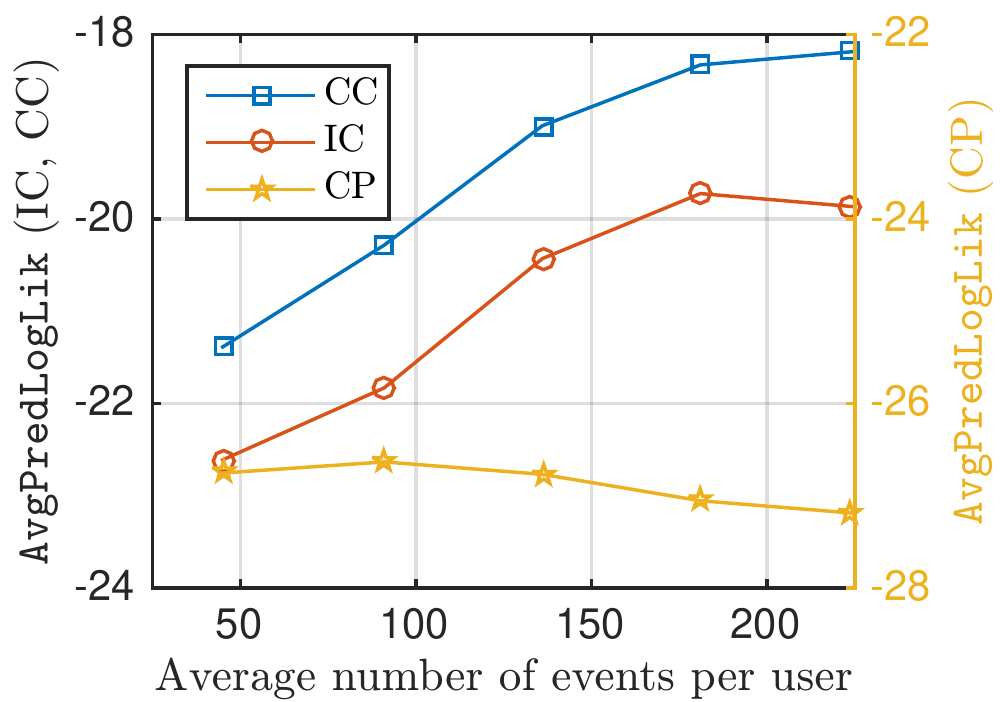}
\includegraphics[width=0.23\textwidth]{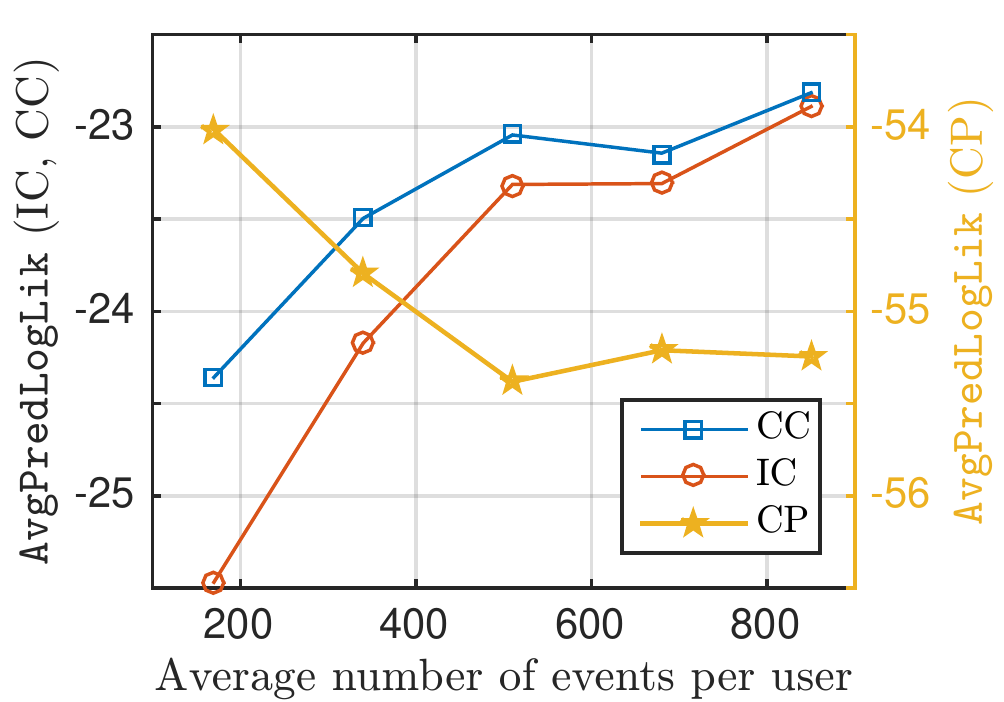} \vspace{-2mm}
\caption{Performance of parameter learning reported via average negative log likelihoods on real datasets, for different size of training set, in Twitter (\textit{left}) URL and (\textit{right}) music datasets.
\texttt{CP} is overfitted and the generalization power of the proposed method is more than \texttt{IC}.
\vspace{-3mm}
 }
\label{fig:reallglks}
\end{figure}
\begin{figure}
\centering
\includegraphics[width=0.245\textwidth]{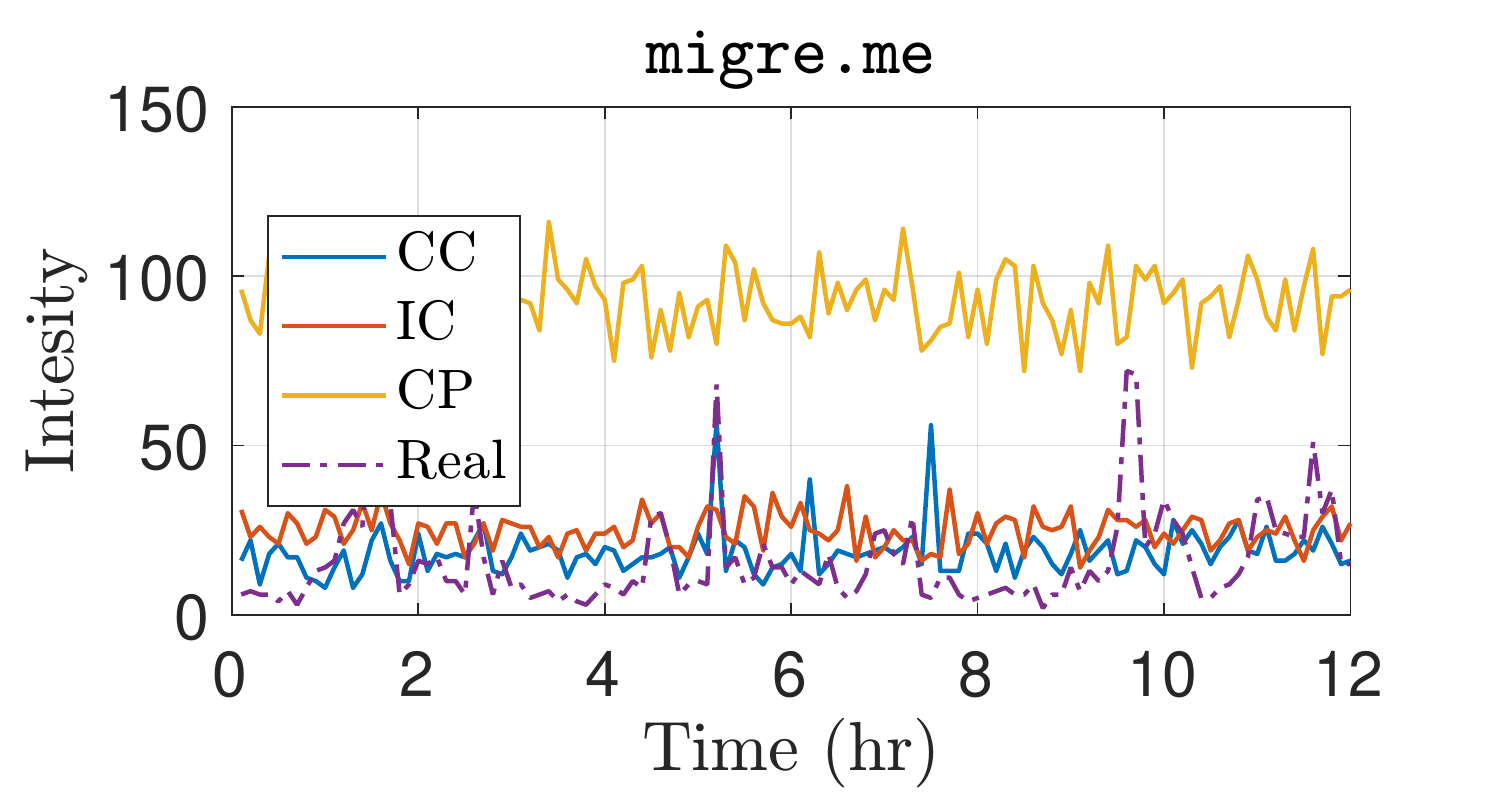} \hspace{-6mm}
\includegraphics[width=0.245\textwidth]{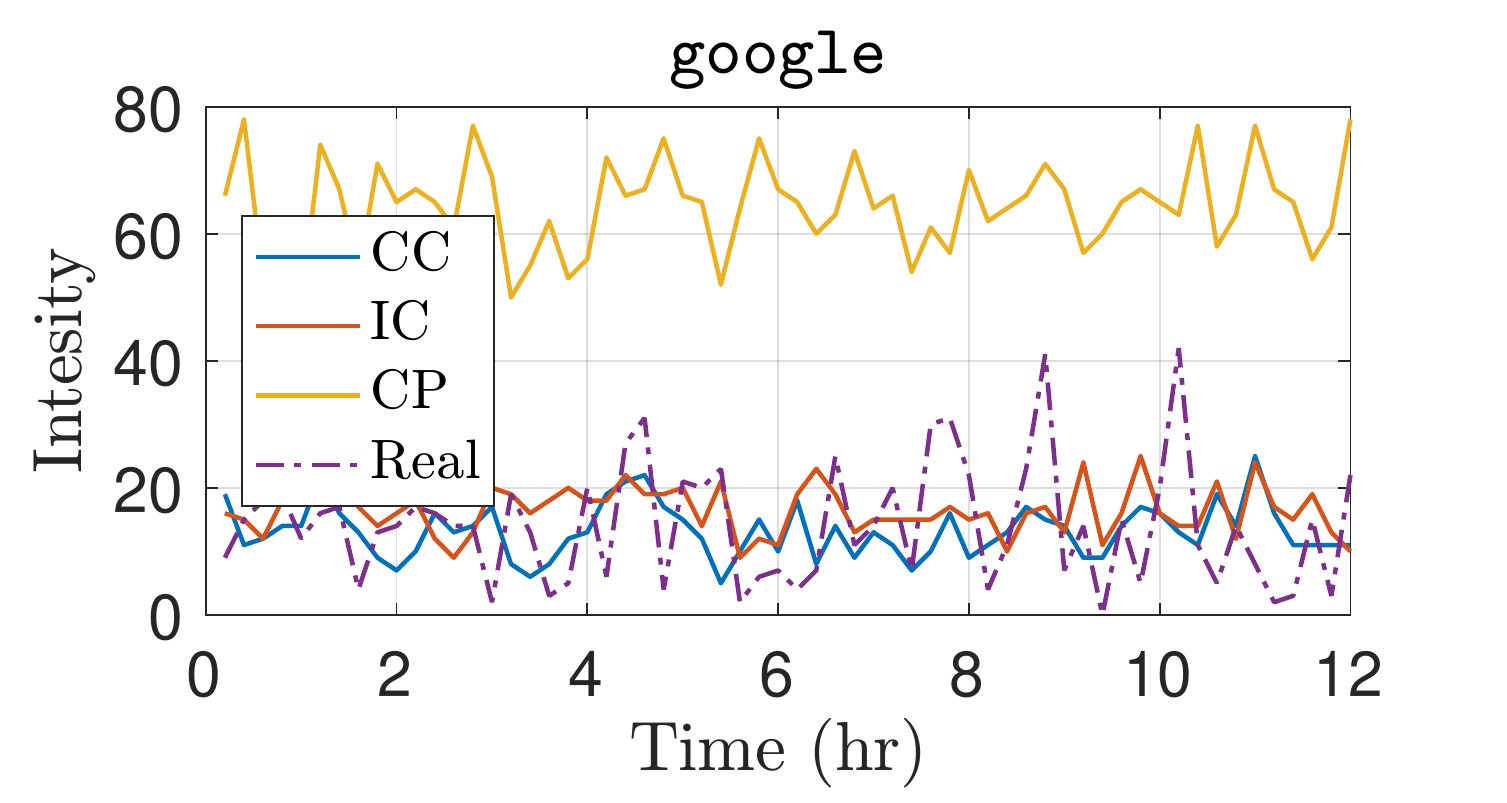} \vspace{-4mm}
\caption{Intensity of generated test events by three methods, compared with real test events in two exemplar products.}
\label{fig:testintensity}
\vspace{-3mm}
\end{figure}
We have also gathered our own dataset from Twitter. To select a set of active users, we query the Twitter search API, during one week in 2015, with some keywords about recent top music and singers. We select 30,000 users, that were actively tweeting about music and new albums.  Then all tweets of these users were crawled using Twitter API, during one month of 2015. 
To prune this dataset, the tweets containing the URLs of two popular media streaming services, Google Play Music and YouTube are retained. Then, we selected active users with more than 50 tweets. We refer to this dataset by ``Twitter music dataset".
The intensity (number of tweets per hour) of  URL and music datasets are plotted in Fig. \ref{fig:emp_music} in which  competition and cooperation between different cascades is apparent.

\begin{figure}[t]
\centering
\includegraphics[width=0.24\textwidth]{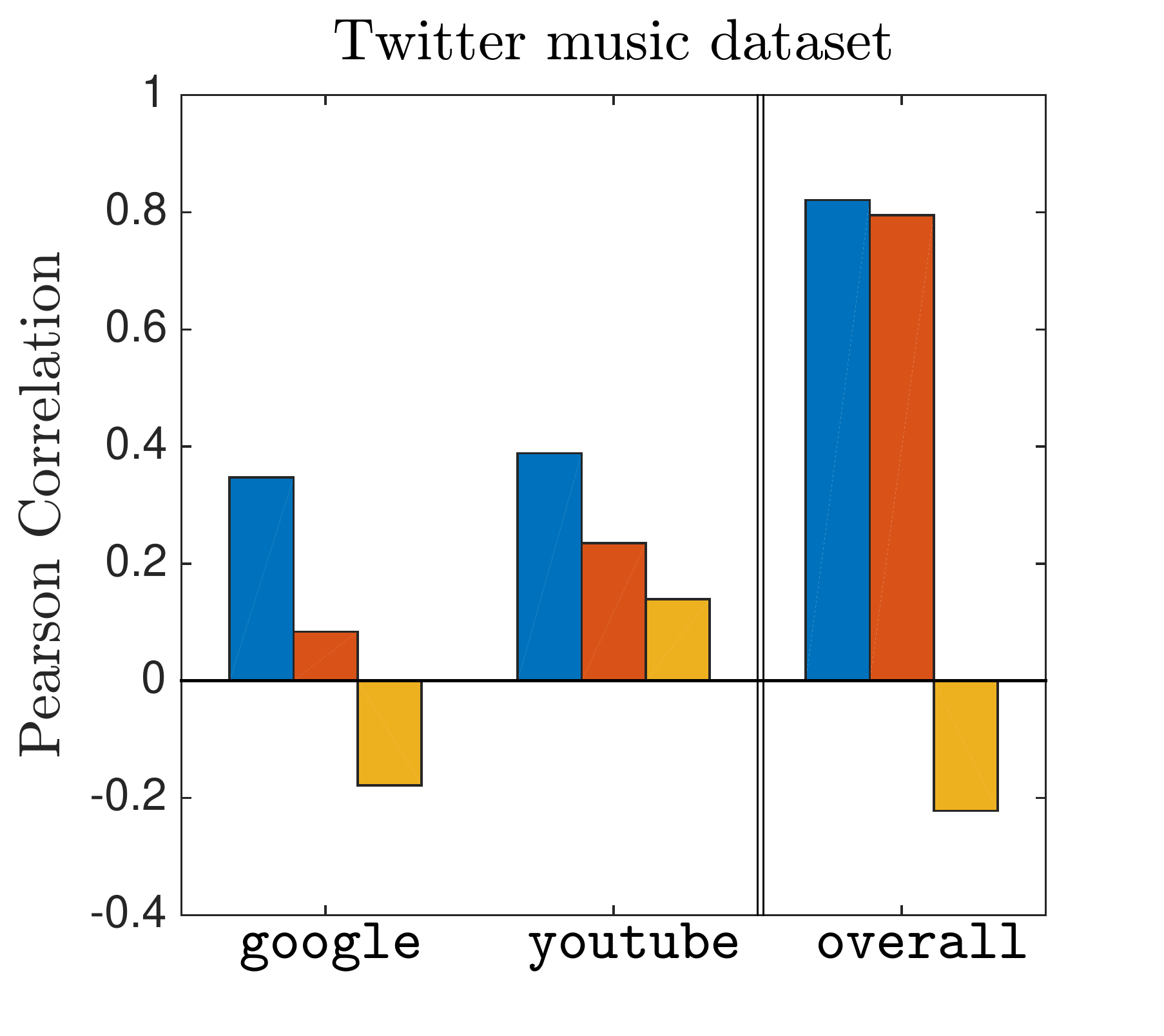} \hspace{-3mm}
\includegraphics[width=0.24\textwidth]{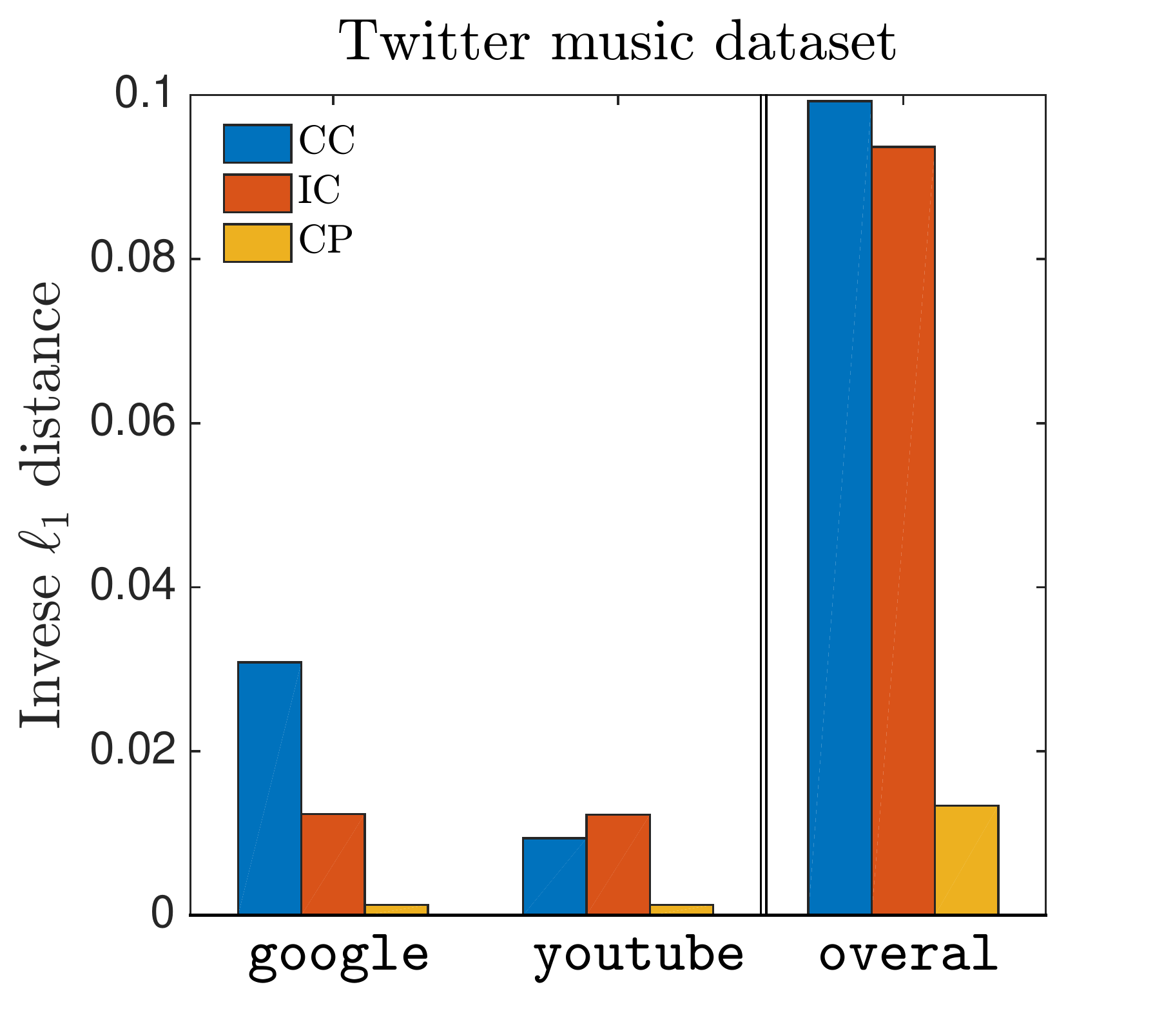}
\includegraphics[width=0.24\textwidth]{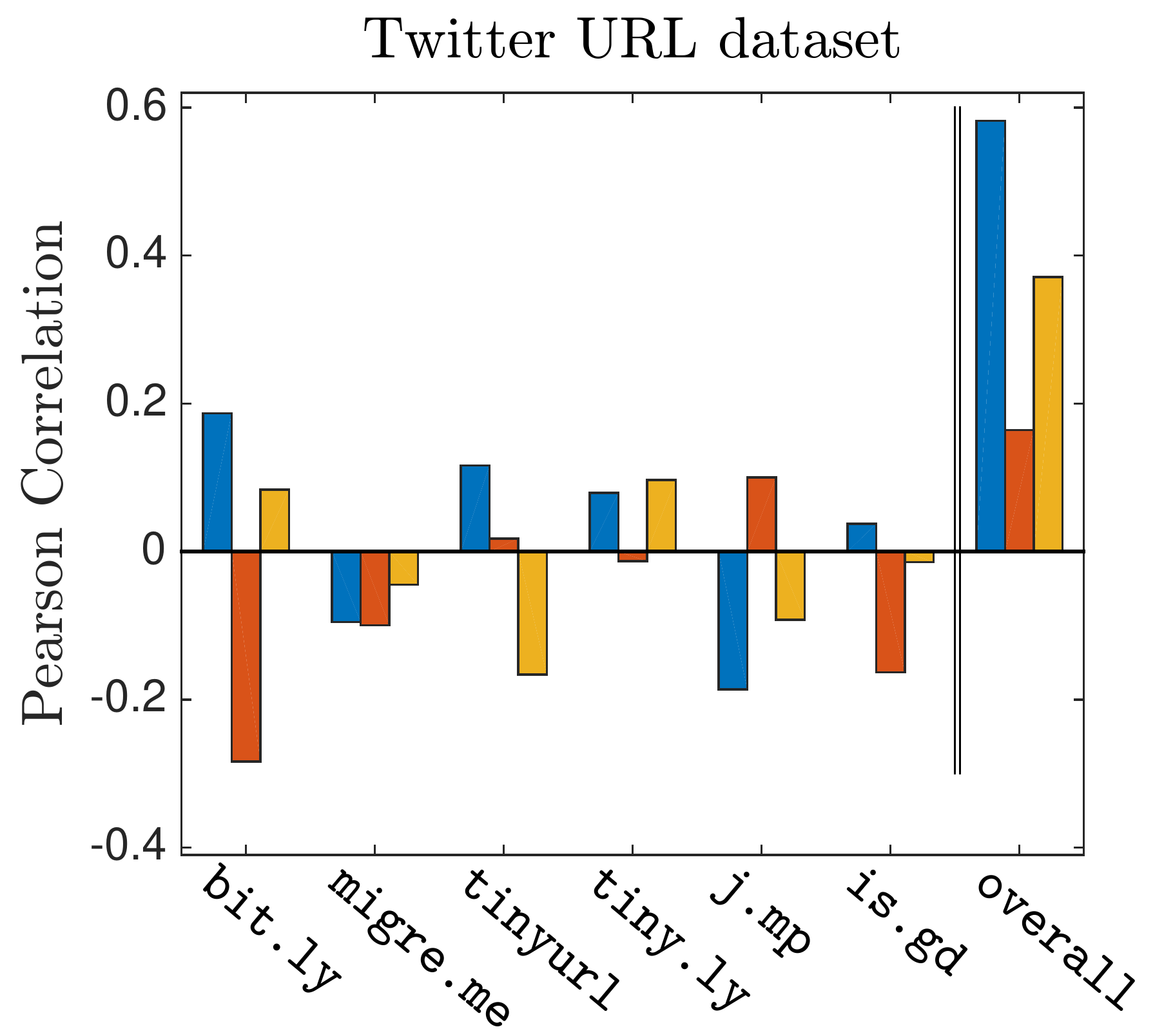}\hspace{-3mm}
\includegraphics[width=0.24\textwidth]{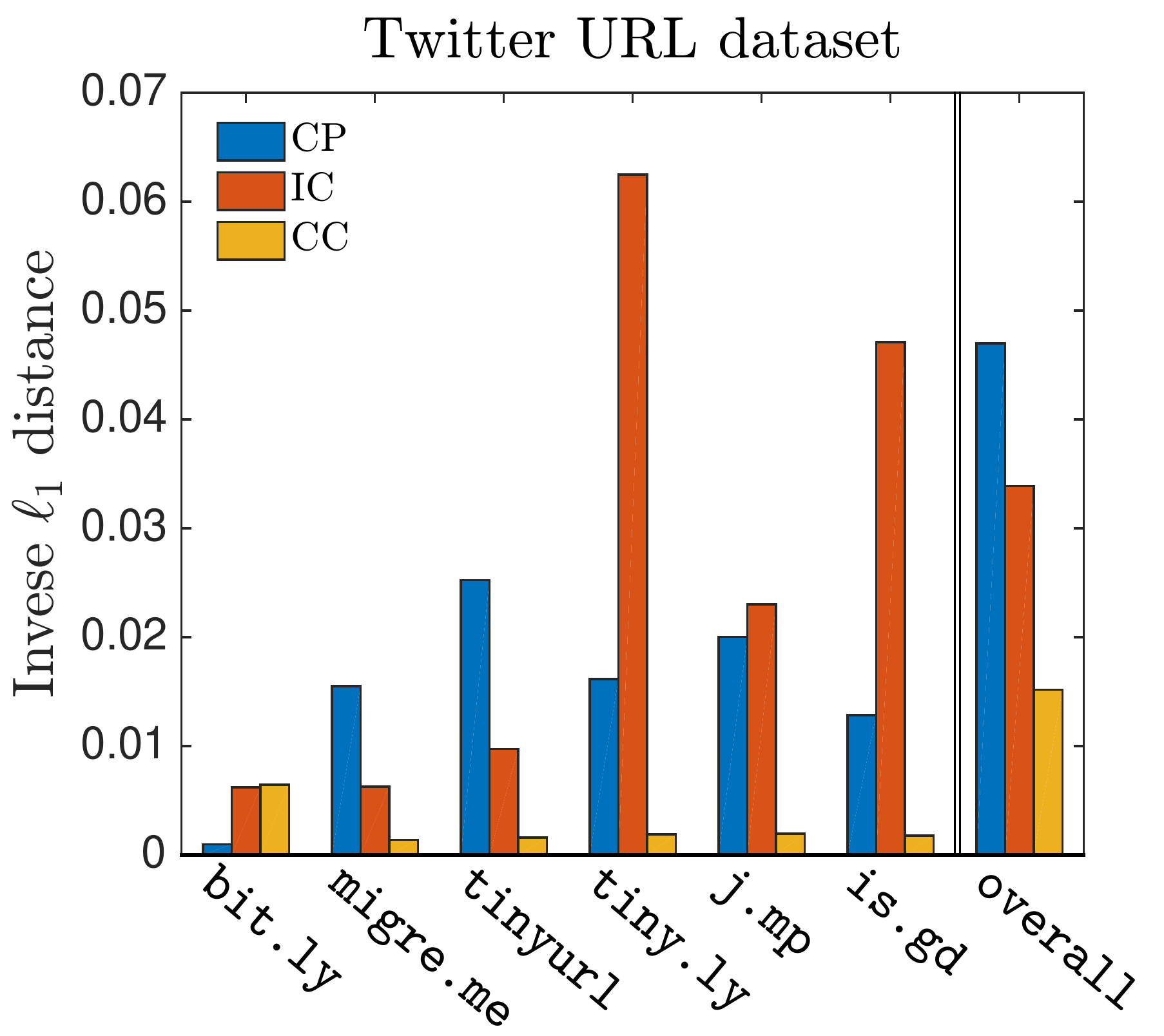} \vspace{-2mm}
\caption{Pearson correlation and inverse $\ell_1$ distance for the correlation of simulated test events.
\vspace{-3mm}
}
\label{fig:barplots}
\end{figure}

\textbf{Evaluation criteria}.
We evaluate our model in comparison with two other multiple cascade models. The names are abbreviated by \texttt{CC}, \texttt{IC}, and \texttt{CP}, respectively for Correlated Cascade,  Independent  Cascade and  Competing Products \cite{valera2015} models. In contrast to synthetic data, there is no ground-truth available for real datasets. Hence we use the \texttt{AvgPredLogLik}, Pearson correlation and $\ell_1$ distance which measure the prediction accuracy of the model.  

\textbf{Parameter Learning}.
We set aside the last $20\%$ of the data for the test set. The models are trained five times with $20\%$ to $100\%$ of the train data and $\beta$ found by cross-validation. The test likelihood for different models is plotted versus the training set size in Fig. \ref{fig:reallglks}. The proposed method has the highest likelihood in both datasets and is increasing with respect to the size of the training set. The weak performance of \texttt{CP} is  due to overfitting on the training data since the number of parameters in \texttt{CP} is proportional to the square of the number of products. Therefore, the overfit in music dataset (with 2 products) should be more severe than the URL dataset (with 6 products), which can be seen from Fig. \ref{fig:reallglks}. 
The slight decrease in the performance of the proposed method in music dataset is due to mix competitive-cooperative nature of this dataset. As illustrated in Fig. \ref{fig:emp_music}, there is a broad range of correlation between cascades. But even in this case, our model has better performance than \texttt{IC}.

\begin{figure}
\centering
\includegraphics[width=0.245\textwidth]{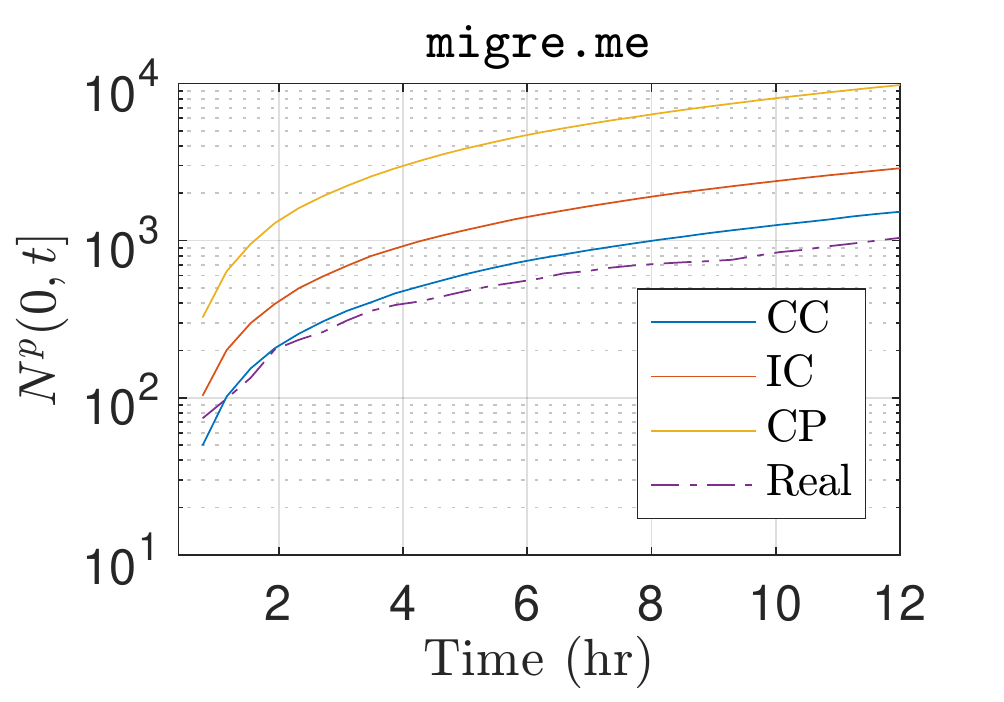}\hspace{-4mm}
\includegraphics[width=0.245\textwidth]{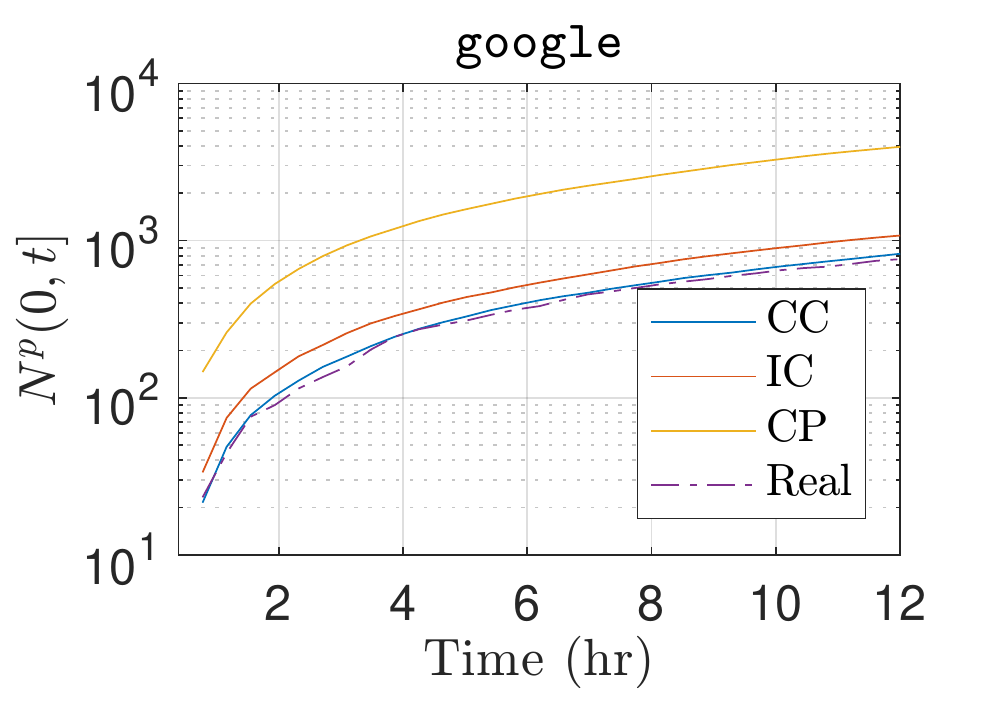} \vspace{-7mm}
\caption{Number of events for the two real exemplar products of Fig \ref{fig:testintensity} .}
\label{fig:testeventnum}
\vspace{-6mm}
\end{figure}

\textbf{Test Events Correlation}.
 To further investigate the proposed method, we also design some experiments on the simulated test events. Using the parameters of the learned model on the whole training data, we generate test events for each model. 
 The model that has higher correlation with real test event, is more successful to predict future events. We examined this feature, qualitatively and quantitatively in diagrams of Figs \ref{fig:testintensity} and \ref{fig:barplots}. We show only the intensity of one exemplar product for each dataset in Fig \ref{fig:testintensity}. 
Qualitatively it can be seen that the proposed method better followed the real test intensity curve, except in a few intervals like the times near 2, and 6 hours in the left of Fig \ref{fig:testintensity},  that real test intensity has large oscillations. Similar to the poor performance of \texttt{CP} in likelihood on test data, Fig \ref{fig:reallglks}, this model has generated more events, which results in its large distance with the true curve. 
Measuring the distance between two curves is a challenging problem in itself. We use two simple measures, the inverse of $\ell_1$ distance, and the well-know Pearson correlation. High inverse $\ell_1$ distance and Pearson correlation, indicated a high correlation between the two curves. In Fig \ref{fig:barplots} the performance of different methods on the two datasets is demonstrated. The result is plotted for two cases; separate products, and all products. In total, we have a higher correlation with the real test events. But like before, the performance of the proposed method in music dataset is slightly better than URL dataset, which is explained already. 
The number of test events for the mentioned exemplar product of Fig. \ref{fig:testintensity} is depicted in Fig. \ref{fig:testeventnum}. We use the semi-logarithmic scale in y-axis to better compare different methods. In both cases \texttt{CC} model results are the closest to the real test data.  

\vspace{-2mm}
\section{Conclusion}
\vspace{-2mm}
In this paper, we proposed a social \emph{behavior adoption} model in which multiple correlated cascades spread over the network. Multidimensional Hawkes process is utilized for the behavior or product adoption with its marks  capturing the decision making procedure of the users.
We have shown several properties of the proposed model on synthetic data. Furthermore, experiments on two real-world datasets establish the competitive-cooperative modeling capability and the superior performance of our model on predicting future events. Importantly, the parameter learning algorithm is shown to be quite efficient in both synthetic and real data.
 
For future work we would like to learn the hyperparameter and the decaying kernel.
Another interesting line of future work would be proposing a model to capture multiple cascades with mixed competing-cooperating behaviors.
\bibliographystyle{aaai}
\bibliography{ref}

\end{document}